\documentclass{article}
\usepackage{subcaption}
\usepackage{url}
\usepackage{bm}
\usepackage{hyperref}

\usepackage{amsmath}
\usepackage{svg}
\usepackage{tikz}
\usepackage{amssymb}
\usepackage{titlesec}
\usepackage{chngcntr}
\usepackage{adjustbox}
\usepackage{quantikz}
\usepackage{amsthm}
\usepackage{thm-restate}
\usepackage{booktabs}
\usepackage{makecell}

\titleclass{\subsubsubsection}{straight}[\subsubsection]
\newcounter{subsubsubsection}[subsubsection]
\renewcommand\thesubsubsubsection{\thesubsubsection.\arabic{subsubsubsection}}
\titleformat{\subsubsubsection}[runin]
  {\normalfont\normalsize\bfseries}{\thesubsubsubsection}{1em}{}

\usepackage{algorithm}
\usepackage{algpseudocode}
\usepackage{authblk}

\usepackage{pdflscape}
\usepackage{rotating}
\usepackage{soul}
\usepackage{authblk}

\title{A Quantum Approach to Stochastic Optimization in Insurance Underwriting}
\author[1]{Mitchell Bordelon}
\author[1]{Maurice Garfinkel}
\author[2]{Vivek Dixit}
\author[1]{Thomas Whitehead}
\author[1]{Jenny Holzbauer}
\author[2]{Guillermo Mijares Vilarino}
\author[2]{Alberto Maldonado Romo}
\author[2]{Abhijit Mitra}
\author[2]{Vaibhaw Kumar\thanks{vaibhaw.kumar@ibm.com}}
\author[1]{Jean Utke\thanks{jutke@allstate.com}}

\affil[1]{Allstate}
\affil[2]{IBM Quantum}

\newcommand{\later}[1]{}

\newcommand{\kett}[1]{| #1\rangle}

\DeclareMathAlphabet{\mathpzc}{OT1}{pzc}{m}{it}
\newcommand{\XX}{X\!X}
\newcommand{\XPXX}{X\!\!+\!\!\XX}

\usepackage{tikz}
\usetikzlibrary{quantikz2}
\tikzset{
    operator/.style={draw, fill=blue!10, minimum size=1.5em},
    phase/.style={draw=none, fill=none, minimum size=1.5em},
}

\begin{document}

\maketitle 
\begin{abstract}
The presence of stochastic elements in combinatorial optimization problems makes them particularly challenging, as such problems quickly become intractable for classical computers even at relatively small sizes. In this work, we propose a novel quantum-classical hybrid scheme for solving a class of stochastic optimization problems known as chance-constrained knapsack problems, in which item weights follow probability distributions and constraints may be violated within a specified risk tolerance. Our method employs knapsack-specific QAOA-based circuits to generate samples which, when combined with a new self-consistent classical recovery scheme introduced in this work, produce high-quality solutions. Experiments carried out on IBM Heron processors, using circuits with depths up to 177 and comprising 3443 gates acting on as many as 150 qubits, yield solutions that indicate performance comparable to classical optimization schemes. The proposed quantum-classical scheme paves the way to tackling such problems, with the potential to outperform approaches that rely solely on classical computation.\footnote{The source code will be made publicly available upon acceptance of the manuscript.}
\end{abstract}

\section{Introduction}

Intrinsic stochasticity in combinatorial optimization problems introduces uncertainty that substantially increases computational complexity and limits the effectiveness of deterministic solution methods. This uncertainty naturally arises in applications including finance, logistics, insurance, and other risk-reward business problems, where fluctuations in demand, prices, or system disruptions require risk-based decision-making. In these settings, uncertainty is typically modeled through chance-constrained formulations, in which constraints are enforced only probabilistically and violations are allowed with a small probability $\epsilon$ that reflects a specified level of risk tolerance. Decision making under such risk-reward trade-offs amounts to balancing probabilistic feasibility against optimization objectives.

In particular, decision tasks in insurance underwriting naturally map to chance-constrained knapsack (CCKP) problems. By abstracting away operational domain-specific details, CCKP formulations capture the core insurance objective weighing expected return against acceptable risk, while preserving a defining feature of the problem: the stochastic nature of constraints. In real-world scenarios, this stochasticity stems from predictive models used to assess risk and estimate uncertainty. Similar characteristics are shared by many other business problems involving discrete decision-making under uncertainty, making the CCKP formulation widely applicable. 

However, despite their practical relevance, chance-constrained optimization problems pose significant computational challenges. They often contain feasible regions that are non-convex, high-dimensional, and overall difficult to characterize, complicating their analysis and practically finding solutions. Classical approaches typically seek to transform probabilistic constraints into deterministic equivalents or tractable approximations \cite{GOYAL2010161}. However, traditional methods, including mixed-integer programming formulations \cite{rossi2025mixedintegerlinearprogrammingapproximations} and heuristic approximations such as scenario-based approaches \cite{Zeng2018-vf} and Monte Carlo simulations \cite{Morton1998}, often scale poorly with problem size. Notably, accurately modeling uncertainty distributions, evaluating constraint violations, and exploring the resulting vast solution spaces leads to prohibitive runtimes or suboptimal solutions. 

In parallel, the application of quantum algorithms to knapsack and other combinatorial optimization problems has recently attracted significant attention~\cite{Mohseni2024challenges,Sharma2024,Bozejko2024,Herman2023,Hegade2024}. Notable examples include Copula-QAOA, which employs warm-started mixers that are biased toward greedy solutions~\cite{vanDam2021}, and the Quantum Tree Generator, which facilitates amplitude amplification across all feasible solutions~\cite{Wilkening2025}. In addition, recent utility-scale experiments have showcased constrained optimization on systems with up to 150 qubits using shallow mixer circuits~\cite{Mohseni2026}. However, to the best of our knowledge, all of these existing gate-based approaches exclusively address problems with deterministic constraints. On the stochastic side, quantum annealing has been applied to chance-constrained problems through scenario-based MILP reformulations~\cite{Ribes2025}, demonstrating the viability of quantum approaches for optimization under uncertainty while still relying on classical scenario sampling.

This work introduces a novel quantum-classical framework that extends QAOA-based approaches to CCKPs, leveraging gate-based quantum hardware to tackle stochastic optimization problems. 
The proposed quantum method employs a variational circuit that embeds chance constraints through a deterministic reformulation, enabling efficient generation of a pool of high-quality bitstrings that respect a prescribed risk level and exhibit an increased likelihood of maximizing knapsack value. At its core, the circuit architecture takes inspiration from the QAOA-based Lagrangian Duality Discretized Adiabatic (LD-DA) circuit originally proposed for deterministic knapsack problems~\cite{53gd-2374}, but it is augmented with a variational component for this stochastic setting. 

Additionally, to mitigate known challenges in variational training, such as barren plateaus~\cite{Larocca_2025, mcclean2018barren}, a parameter transfer strategy based on constraint-alignment is introduced. This enables parameters to be transferred from relatively small problem instances to larger ones, where the exponential growth of the solution space renders direct optimization challenging. These parameters are subsequently refined with a limited number of fine-tuning optimization iterations to obtain samples characterized by sufficiently low objective values, which are then post-processed by a novel self-consistent classical recovery scheme customized for CCKP problems. 

Furthermore, large knapsack instances pose substantial challenges for classical solvers. Accordingly, this work includes classical benchmarking on increasingly large CCKPs to assess the onset of computational intractability, motivate the proposed quantum-classical approach, and contextualize performance comparisons. Classical benchmarking experiments obtained using Gurobi were used to compare against quantum hardware experiments conducted on IBM Heron processors, involving problem sizes ranging from 20 to 150 items. This corresponds to circuit depths up to 177 comprising a total of 3443 gates. Overall, the quantum solutions are comparable in quality to classical heuristic methods for all tested instances 
under the same evaluation criteria.

The remainder of this paper is organized as follows: \autoref{sec:methods} introduces the chance-constrained knapsack formulation, the variational quantum circuit construction, the hybrid training, and the post-processing methodology. \autoref{sec:results} presents numerical results from both quantum simulations and hardware experiments alongside classical benchmarks followed by conclusions and outlook in \autoref{sec:conclusions}. The appendices (\autoref{app:derivation}--\autoref{app:classical_solvers}) provide supporting theoretical proofs, extended analytical derivations, details on mixer behavior and circuit geometry, pseudocode for post-processing and recovery algorithms, and additional experimental and implementation details that supplement the main text.

\section{Methods}\label{sec:methods}

\subsection{Chance-constrained knapsack problem}
The classical (deterministic) multi-dimensional knapsack problem extends the standard 0--1 knapsack by imposing multiple constraints. The chance‑constrained knapsack problem (CCKP) generalizes this formulation by replacing the deterministic item weights with random variables, which accounts for uncertainty in resource consumption. Specifically, for item $i \in \{1,\ldots,n\}$ and constraint $m \in \{1,\ldots,M\}$, the random weight $\tilde{w}_i^{(m)}$ is drawn from a distribution family $\Delta(\mu,\sigma)$ with mean $\mu_i^{(m)}$ and standard deviation $\sigma_i^{(m)}$.

The CCKP seeks a binary decision vector $x \in \{0,1\}^n$, where $x_i=1$ indicates item $i$ is selected. An optimal $x$ maximizes total value while ensuring constraints on weight capacity are satisfied with high probability. Formally, the problem is defined as:

\begin{equation}
\label{eq:chance-constrained}
\begin{aligned}
\max_{\bm x\in\{0,1\}^n}\;&\sum_{i=1}^n v_i x_i \\
\text{s.t. }&\mathbb{P}\!\left(\sum_{i=1}^n \tilde{w}_i^{(m)} x_i \leq C^{(m)}\right) \geq 1-\epsilon^{(m)}, \quad m \in \{1,2,\dots, M\}. 
\end{aligned}
\end{equation}

\noindent For certain weight distributions $\Delta$, the CCKP can be reformulated with deterministic constraints as follows:
\begin{restatable}[]{theorem}{knapsacktheorem}
\label{theorem:1}
For the mutually independent knapsack weights $\tilde{w}_i^{(m)}$  chosen from a distributions $\Delta(\mu,\sigma)$ closed under convolution with mean $\mu_i^{(m)}$ and standard deviation $\sigma_i^{(m)}$, and its cumulative distribution function $\Phi$ strictly increasing and invertible, the chance-constrained problem in \eqref{eq:chance-constrained} can be rewritten as

\begin{align}
\quad
\max_{\bm x \in\{0,1\}^n} \quad & \sum_{i=1}^n v_i x_i \\
\text{s.t.} \quad & \sum_{i=1}^n \mu_i^{(m)} x_i +
                    \Phi^{-1}(1-\epsilon^{(m)})\sqrt{\sum_{i=1}^n {\sigma_i^{(m)}}^2 x_i}
                    \leq C^{(m)} \label{eq:socp-constraint} \quad \\
                    &\forall \quad m \in \{1,2,\dots, M\} \notag
\end{align}
where $\Phi^{-1}$ is the inverse of the cumulative distribution function (CDF) of the distribution $\Delta(0,1)$ and some small $\epsilon^{(m)} > 0$.\footnote{For distributions satisfying the stated conditions, the probabilistic constraint in~\eqref{eq:chance-constrained} admits an exact reformulation as the second‑order cone constraint in~\eqref{eq:socp-constraint}. Classical solvers such as Gurobi handle these constraints by squaring both sides to obtain an equivalent quadratic constraint, enabling the use of second-order cone programming (SOCP)~\cite{GOYAL2010161}.}
\end{restatable}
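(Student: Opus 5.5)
The plan is to treat each constraint index $m$ separately: the objective is unchanged and the constraints decouple, so it suffices to show that for a fixed $m$ and a fixed $\bm x\in\{0,1\}^n$ the probabilistic constraint in \eqref{eq:chance-constrained} holds if and only if \eqref{eq:socp-constraint} holds. I will therefore fix $m$, drop the superscript, and fix $\bm x$. Let $S=\{i: x_i=1\}$ and $W(\bm x)=\sum_{i}\tilde w_i x_i=\sum_{i\in S}\tilde w_i$. If $S=\emptyset$, then $W=0$ and both sides reduce to $0\le C$, provided $\Phi^{-1}(1-\epsilon)$ is finite, which holds since $\Phi$ is invertible. So I assume $S\neq\emptyset$.

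\textbf{Step 1: the distribution of the sum.} By mutual independence, the mean of $W(\bm x)$ is $\mu_S=\sum_i\mu_i x_i$. Its variance is $\sigma_S^2=\sum_i\sigma_i^2x_i$, where I use $x_i^2=x_i$; this is the source of the linear term under the square root in \eqref{eq:socp-constraint}. Closure of $\Delta$ under convolution means that a sum of independent members of the family is again a member. I read this, as the statement intends, as saying that the family is parametrised by mean and standard deviation, so that $W(\bm x)\sim\Delta(\mu_S,\sigma_S)$.

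\textbf{Step 2: standardisation.} Next I need $\Delta(\mu,\sigma)$ to be a location--scale family: if $Y\sim\Delta(\mu,\sigma)$ then $(Y-\mu)/\sigma\sim\Delta(0,1)$. This is implicit in the statement, since $\Phi$ is defined only for $\Delta(0,1)$, and it holds for the motivating Gaussian case. I expect this step to be the main obstacle. Closure under convolution alone does not force the location--scale property, so I will state it explicitly as the interpretation of the hypothesis rather than try to derive it. Assuming it, and using $\sigma_S>0$ (items with $\sigma_i=0$ could be handled separately as deterministic), we get
\[
\mathbb P\bigl(W(\bm x)\le C\bigr)=\mathbb P\!\left(Z\le \frac{C-\mu_S}{\sigma_S}\right)=\Phi\!\left(\frac{C-\mu_S}{\sigma_S}\right),\qquad Z\sim\Delta(0,1).
\]

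\textbf{Step 3: inversion.} Since $\Phi$ is strictly increasing and invertible, $\Phi^{-1}$ is strictly increasing. Hence $\Phi\bigl((C-\mu_S)/\sigma_S\bigr)\ge 1-\epsilon$ holds if and only if $(C-\mu_S)/\sigma_S\ge\Phi^{-1}(1-\epsilon)$. Multiplying by $\sigma_S>0$ and rearranging gives $\mu_S+\Phi^{-1}(1-\epsilon)\sqrt{\sum_i\sigma_i^2x_i}\le C$, which is exactly \eqref{eq:socp-constraint}.

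Because this equivalence holds for every $\bm x$ and every $m$, the feasible sets of the two problems coincide. The objectives are identical, so the two problems are equivalent.
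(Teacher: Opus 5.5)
Your proof is correct and follows the same route as the paper's: closure under convolution together with $x_i^2=x_i$ identifies the law of the sum, you standardize to $\Delta(0,1)$, and you invert the strictly increasing $\Phi$ before multiplying through by $\sigma_S>0$. Your refinements stay within the same argument: you treat the empty selection explicitly, you flag the location--scale property as an implicit hypothesis, you state each step as an equivalence (the paper only writes ``implies''), and you note that only $\sigma_S>0$ is needed for the rearrangement, not the paper's extra assumption $\Phi^{-1}(1-\epsilon)>0$.
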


\autoref{app:derivation} and \cite{GOYAL2010161} provide the proof of this result. As stated in Theorem \ref{theorem:1}, the distribution $\Delta$ must be closed under convolution, with a CDF that is strictly increasing and invertible. Several commonly used distributions belonging to the family of $\alpha$-stable distributions including Normal, Cauchy, and, L\'evy
satisfy these properties. In this paper, we restrict our attention to item weights drawn from Normal distributions. 

\subsection{Classical problem instance creation}\label{sec:instances}

Classical benchmarks are constructed using a deterministic mixed‑integer second\-order cone program (MISOCP) reformulation of the CCKP in \eqref{eq:socp-constraint} by squaring both sides. This representation isolates the interaction between combinatorial decision variables and quadratic second‑order cone constraints, enabling systematic exploration of regimes that exhibit significant classical computational difficulty. Hard problem instances are generated and solved using Gurobi 12.0.2~\cite{gurobi} on an Intel Xeon Silver 4310 CPU (2.10 GHz) with 12 cores. Problem instances are chosen to lie within regimes accessible to both classical and quantum hardware, specifically from combinations of $n \in \{20, 40, 75, 100, 150\}$ and $M \in \{1, 5, 20, 50, 100\}$ in conjunction with the following additional parameters:

\begin{enumerate}
    \item Maximum item value $v_{\max} \in \{4, 20, 50, 100\}$ as the upper bound for uniform sampling of item values $v_i \in (0, v_{\max}]$.
    \item Maximum item weight mean $\mu_{\max} \in \{5, 25, 50, 100\}$, where individual item weight means $\mu_i^{(m)}$ are sampled uniformly from the interval $(0, \mu_{\max}]$.
    \item Item weight standard deviation $\sigma$, where for all items, the variance $\sigma^2 \in \{0.5, 1, 1.5, 2, 3\}$. 
   \item Approximate constraint density ratio $r_D \in \{0.25, 0.5, 0.75\}$, where $r_D \approx \frac{\sum_{i=1}^n \mathbf{1}\bigl(\mu_i^{(m)} \neq 0\bigr)}{n}$, and this ratio informs a random subset selection of $\mu_i^{(m)}$ to be set to zero, together with the corresponding subset of $\sigma_i^{(m)}$.
    \item Approximate capacity ratio $r_C \in \{0.25, 0.5, 0.75\}$, where $r_C \approx \frac{\sum_{i=1}^n \mu_i^{(m)}}{C^{(m)}}$, in order to determine $C^{(m)}$ after the parameters $\mu_i^{(m)}$ have been specified according to the rules outlined above.
    \item Fixed value for $\epsilon^{(m)}=0.2$ for all $m$ and henceforth denote $\epsilon^{(m)} \equiv \epsilon$. This permits comparisons in algorithm performance across instances rather than changes in risk tolerance.
\end{enumerate}

In the first optimization phase, a comprehensive parameter sweep is performed over multiple dimensions characterizing the stochastic knapsack problem, resulting in a total of 3,900 instances. Each instance is allotted a maximum runtime of three minutes, and instances which Gurobi could prove optimality within this time limit are labeled as “easy.” Empirical results (see \autoref{fig: gurobi}) indicate that problem difficulty is most strongly affected by increasing values of $n$ followed by $M$, with additional sensitivity to smaller values of $\mu_{\max}$. Larger values of $\sigma^2$ further increase runtime and exacerbate optimality gaps, whereas $v_{\max}$, $r_C$, and $r_D$ have negligible impact. Empirically, the parameter configuration $v_{\max}=50$, $\mu_{\max}=5$, $r_C=0.25$, and $r_D=0.5$ consistently yield challenging instances as $n$ and $M$ increased.  Accordingly, these parameter values are fixed for the following two additional phases.

The second optimization phase conducts a limited exploration of the weight variance $\sigma^2$. In the third phase, $\sigma^2$ is fixed to $\sigma^2=1$ while investigating solver performance as the number of items scales.  Sixteen instances are created by varying $n \in \{20, 40, 50, 75, 100, 150, 500, 1000\}$ and $M \in \{5, 20\}$. The problem instances generated in this step are used to assess the performance of multiple classical solvers, as summarized in~\autoref{fig: classical benchmarking}. From this set, only instances with $n \leq 500$ are retained as candidates for quantum simulation and hardware execution, forming the set of problems addressed by the quantum-classical algorithm experiments below.

\begin{figure}
    \centering
    \includegraphics[width=1.0\linewidth]{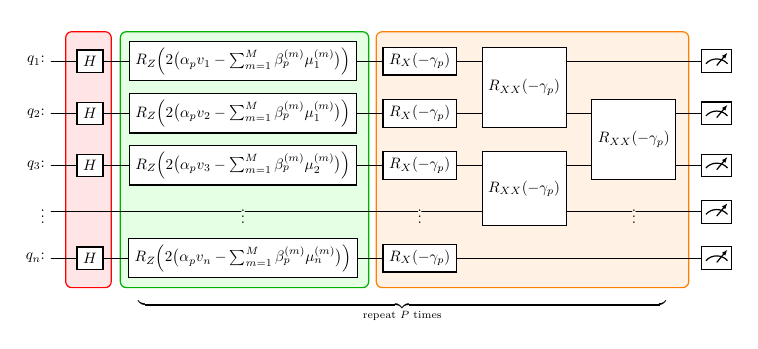}
    \caption{QAOA-based variational quantum circuit with $P$ layers and three variational parameters $\boldsymbol{\theta}=(\boldsymbol{\alpha, \beta, \gamma})\in I\!\!R^{P\times (2+M)}$. The circuit contains parametrized single qubit $X$ and two-qubit $\XX$ rotations where two-qubit gates are arranged over a line. The phase separation block applies diagonal $R_Z(2(\alpha_p v_i- \sum_{m=1}^M \beta_p^{(m)} \mu_i^{(m)}))$ gates, where layers are indexed by $p=1,\ldots,P$.}
    \label{fig:ansatz}
\end{figure}

\subsection{Quantum circuit construction}

%

To evaluate these instances on quantum hardware, a circuit ansatz is required that mirrors the structure of the underlying optimization problem while remaining compatible with gate‑based devices. Accordingly, the proposed quantum-classical algorithm employs a QAOA-based~\cite{farhi2014quantum} variational ansatz with $Z$, $X$ and $\XX$ rotations, as shown in \autoref{fig:ansatz}. This construction is inspired by the Trotterized time-evolution operator arising from the Lagrangian Dual-Discretized Adiabatic Quantum Computation (LD-DAQC) scheme that has exhibited improved performance relative to canonical QUBO-based circuits on a large set of knapsack problems~\cite{53gd-2374}. The ansatz incorporates knapsack item values and mean weights into $Z$-gate rotations with the mixer as single $X$ and two-qubit $\XX$ gates. 

Formally, the linear geometry ansatz can be split into two Hamiltonians, $H_{\mathrm{problem}}$ and $H_\mathrm{{mixer}}$, defined as

\begin{equation}
    H_{\mathrm{problem}}^{(p)}
    = \sum_{i=1}^{n}
    \left(\alpha_p v_i - \sum_{m=1}^{M}\beta_p^{(m)}\mu_i^{(m)}\right) Z_i,
    \label{eq:hproblem}
\end{equation}

\begin{equation}
    H_{\mathrm{mixer}}^{(p)}
    = -\frac{1}{2}\sum_{i=1}^{n} (\gamma_p)X_i
    - \frac{1}{2}\sum_{i=1}^{n-1} (\gamma_p)X_i X_{i+1},
    \qquad p \in \{1,\ldots,P\}
    \label{eq:hmixer}
\end{equation}
for variational parameters $\alpha, \beta, \gamma$ and each item $i$, constraint $m$, and QAOA layer $p$. The linear two-qubit entanglement map requires relatively low qubit connectivity and makes it well-suited for IBM Quantum processors with heavy-hex and square lattice qubit topology.

\subsubsection{Toy example}
\label{sec:toy-example}
While the full circuit structure is illustrated in \autoref{fig:ansatz}, its effect on constrained knapsack dynamics is most easily understood in a minimal setting. To build intuition, the role of $Z$-diagonal phase separator $H_{\mathrm{problem}}^{(p)}$  and the $\XPXX$ mixer terms in $H_{\mathrm{mixer}}^{(p)}$ are first analyzed through a minimal two-state toy example using a minimal two‑state toy example with a single constraint and linear geometry ($M=1$).\footnote{We omit in this section the superscripts $^{(m)}$.} 

\paragraph{Two-state subspace.} Consider a knapsack of $n$ items arranged on a cycle with values $v_j$ and weights $\mu_j$ for $j = 1,\dots,n$.
Focusing on two adjacent items $j$ and $j{+}1$ on the mixer graph, we examine two computational-basis states $\kett{a}=\kett{1_j0_{j+1}}$ and $\kett{b}=\kett{0_j1_{j+1}}$ that differ only on these qubits while all other selections remain identical. The states $\kett{a}$ and $\kett{b}$ are connected by an $\XX$ swap ($\lvert 10\rangle \leftrightarrow \lvert 01\rangle$) along the edge $(j,\, j{+}1)$. Restricting to this two-state subspace, the uniform superposition induced by the initial state $\lvert+\rangle^{\otimes n}$ takes the form
\begin{equation}
    \lvert\psi_0\rangle = \tfrac{1}{\sqrt{2}}\bigl(\lvert a\rangle + \lvert b\rangle\bigr).
\end{equation}
Acting on $\lvert\psi_0\rangle$, the $Z$-diagonal phase separator imparts opposing phases on the two configurations, yielding
\begin{equation}
    \lvert\psi_1\rangle = \tfrac{1}{\sqrt{2}}\bigl(e^{-i\delta/2}\lvert a\rangle + e^{+i\delta/2}\lvert b\rangle\bigr),
\end{equation}

\noindent where the \emph{phase gap},
\begin{equation}
\label{eq:phase-gap}
    \delta \;\equiv\; E_a - E_b \;=\; 2\Bigl[\alpha\bigl(v_{j+1} - v_j\bigr) \;-\; \beta\bigl(\mu_{j+1} - \mu_j\bigr)\Bigr],
\end{equation} 
is determined by the difference in value and weight contributions between the two states.

When restricted to the $\{\lvert a\rangle,\lvert b\rangle\}$ subspace, the $\XX$ interaction acting on edge $(j,\,j{+}1)$ reduces to a $-\sigma_x$ operator (negative sign due to Eq.~\eqref{eq:hmixer}), as it swaps $\lvert 10\rangle \leftrightarrow \lvert 01\rangle$. Therefore, the mixer unitary on this subspace becomes 
\begin{equation}
    e^{i\gamma\sigma_x} = \cos\gamma\, I + i\sin\gamma\,\sigma_x. 
\end{equation}
Applied to $\lvert\psi_1\rangle$, the mixer produces the final state
\begin{equation}
    \lvert\psi_2\rangle = \tfrac{1}{\sqrt{2}}
    \begin{pmatrix}
        \cos\gamma\, e^{-i\delta/2} + i\sin\gamma\, e^{+i\delta/2} \\
        \cos\gamma\, e^{+i\delta/2} + i\sin\gamma\, e^{-i\delta/2}
    \end{pmatrix}.
\end{equation}
As a result, the measurement probabilities take the form
\begin{equation}
\begin{aligned}
    \mathbb{P}(a) &= \tfrac{1}{2}\lvert\cos\gamma\, e^{-i\delta/2} + i\sin\gamma\, e^{+i\delta/2}\rvert^2
          = \tfrac{1}{2}\bigl(1 - \sin 2\gamma\;\sin\delta\bigr), \\
    \mathbb{P}(b) &= \tfrac{1}{2}\bigl(1 + \sin 2\gamma\;\sin\delta\bigr),
\end{aligned}
\label{eq:measurement prob}
\end{equation}
where the population imbalance between the two configurations is given by
\begin{equation}
\label{eq:prob-transfer}
    \mathbb{P}(b) - \mathbb{P}(a) \;=\; \sin 2\gamma \;\sin\delta,
\end{equation}

When returning to the full problem, feasibility constraints reemerge: the capacity $C$ enters implicitly through a penalty term in the classical loss function rather than through the quantum circuit itself. As a result, feasibility requirements constrain how the classical optimizer selects the variational parameters~$\alpha$, $\beta$, and $\gamma$, linking the capacity to probability transfer. To make this relationship explicit, we now turn to the expected weight distribution, written as
\begin{equation}
\label{eq:expected-weight}
    \langle W \rangle = \mathbb{P}(a)\, W_a + \mathbb{P}(b)\, W_b,
\end{equation}
where $W_a = W_{\mathrm{rest}} + \mu_j$ and $W_b = W_{\mathrm{rest}} + \mu_{j+1}$ are the total weights of the two item selections and  $W_{\mathrm{rest}} = \sum_{\ell \in S \setminus\{j\}} \mu_\ell$ is the weight of the items common to both states. Substituting measurement probabilities from Eq.~\ref{eq:measurement prob}, we get:
\begin{equation}
    \langle W \rangle = W_{\mathrm{rest}} + \frac{\mu_j + \mu_{j+1}}{2} + \frac{\mu_{j+1} - \mu_j}{2}\,\sin 2\gamma\;\sin\delta.
\end{equation}
Additionally, the \emph{marginal residual}, defined as
\begin{equation}
\label{eq:marginal-residual}
    \hat{r} \;=\; \bigg(W_{\mathrm{rest}} + \frac{\mu_j + \mu_{j+1}}{2}\bigg) - C,
\end{equation}
measures the constraint violation after the fixed items and the average weight of the two swappable items.
The feasibility condition $\langle W \rangle \leq C$, equivalent to $\hat{r}\leq 0$, therefore enforces a direct constraint on the allowable probability transfer between the two configurations:
\begin{equation}
\label{eq:capacity-bound}
    \sin 2\gamma\;\sin\delta \;\leq\; \frac{2\hat{r}}{\mu_{j} - \mu_{j+1}}
    \qquad (\text{when}\; \mu_{j} < \mu_{j+1}).
\end{equation}
When $\mu_j < \mu_{j+1}$, the right-hand side of~\eqref{eq:capacity-bound} is decreasing in $\hat r$, implying that larger marginal residuals increasingly restrict feasible probability transfer, eventually forcing a reversal of flow. 

Under the same distribution, the expected value is
\begin{equation}
    \langle V \rangle = \frac{V_a + V_b}{2} + \frac{v_{j+1} - v_j}{2}\,\sin 2\gamma\;\sin\delta,
\end{equation}
which monotonically increases in $\sin 2\gamma\;\sin\delta$ when $v_{j+1} > v_j$.
The optimizer can maximize value by pushing the amplitude transfer to the capacity limit and its optimal operating point saturates the bound
\begin{equation}
\label{eq:optimal-transfer}
    \sin 2\gamma^{\star}\;\sin\delta^{\star} \;=\; \min\!\left(1,\;\frac{2\hat{r}}{\mu_{j} - \mu_{j+1}}\right).
\end{equation}

When $\hat{r} \leq (\mu_{j} - \mu_{j+1})/2$ the minimum saturates at~$1$ and the optimizer achieves full transfer unconstrained by capacity.
When $(\mu_{j} - \mu_{j+1})/2 < \hat{r}  \leq 0$, the bound is active and capacity throttles the transfer -- the optimizer must reduce~$\gamma$ or increase~$\beta/\alpha$ to stay feasible.
When $\hat{r} >  0$, the average weight already exceeds capacity, feasibility requires $\sin 2\gamma\,\sin\delta < 0$, and the optimizer reverses the amplitude flow toward the lighter item even at the cost of value.


The three $\hat{r}$-based regimes of the toy model and the optimal operating point~\eqref{eq:optimal-transfer} carry over to the chance-constrained variant with one modification: $\hat{r}$ is replaced by the \emph{constraint residual}
\begin{equation}
\label{eq:constraint-residual}
    r(\bm{x}) \;=\; L(\bm{x})- C,
\end{equation}
where $L(\bm{x})$ is the left-hand side of (\ref{eq:socp-constraint}).
Because the variance term $\sqrt{\sum \sigma_j^2 x_j}$ 
in $L(\bm{x})$ is not separable across qubits, $r$ cannot be encoded in per-item $Z$-rotations. 
This gap is bridged by the loss function~\eqref{eq:loss}, which evaluates $r$ on each sample and feeds the violation signal back to the optimizer. Over successive iterations, this drives $\beta^{\star}/\alpha^{\star}$ above the deterministic Lagrange multiplier, compensating for the variance risk that the circuit cannot represent directly, until the probability transfer on each boundary edge matches the operating point~\eqref{eq:optimal-transfer} with $\hat{r} \to r$.

\subsubsection{The role of \texorpdfstring{$H_\mathrm{{mixer}}$}{H-mixer}}

The effectiveness of the ansatz is ultimately dictated by the behavior of the $H_\mathrm{{mixer}}$ with $\XPXX$ components in Eq.~\eqref{eq:hmixer}. Its role can be evaluated by continuing to examine the simplified CCKP instance with a single constraint ($M=1$) and homogeneous weight parameters $\mu_i = \mu$, $\sigma_i = \sigma$ for all items. Substituting into the deterministic reformulation of Theorem~\ref{theorem:1}, the chance constraint becomes

\begin{equation}
\label{eq:equal-weight-constraint}
    \mu\, k + z_\epsilon\, \sigma \sqrt{k} \;\leq\; C,
\end{equation}
where $k = \sum_{i=1}^n x_i$ is the number of selected items and $z_\epsilon = \Phi^{-1}(1-\epsilon)$.
Substituting $u = \sqrt{k}$ yields the quadratic $\mu\, u^2 + z_\epsilon\, \sigma\, u - C \leq 0$, whose positive root gives the maximum feasible cardinality
\begin{equation}
\label{eq:kstar}
    k^{\star} = \left\lfloor \left(
        \frac{-z_\epsilon\, \sigma + \sqrt{z_\epsilon^2\, \sigma^2 + 4\mu\, C}}{2\mu}
    \right)^{\!2}\;\right\rfloor.
\end{equation}
The optimization problem therefore reduces to selecting the $k^{\star}$ items with the largest values $v_i$. The constraint determines \emph{how many} items to select; the choice of \emph{which} items is a combinatorial optimization within the set of $k^{\star}$-element subsets. Note, in general, except in this simplified case, the optimal cardinality $k^{\star}$ is not known beforehand.

The initial state $\lvert+\rangle^{\otimes n}$ (see \autoref{fig:ansatz}) distributes amplitude across Hamming-weight sectors according to the binomial distribution $\mathbb{P}(k) = \binom{n}{k}/2^n$, which is sharply peaked around $k = n/2$ with the bitstring subspace $\mathcal{H}_k = \operatorname{span}\bigl\{\lvert x\rangle : \lvert x\rvert = k\bigr\}$. When the feasible cardinality $k^{\star}$ is far from this peak, as is typical under restrictive capacity constraints, the amplitude initially assigned to $\mathcal{H}_{k^{\star}}$ is exponentially suppressed. Consequently, an effective mixer must facilitate the redistribution of amplitude between different Hamming-weight sectors in order to realize feasible solutions.

The necessary components for this redistribution are contained within the single-qubit and two-qubit interactions of $H_\mathrm{{mixer}}$. The single-qubit $X$ term allows transitions between $\mathcal{H}_k$ and $\mathcal{H}_{k \pm 1}$, while the $\XX$ term
enables swaps within a fixed Hamming-weight sector ($\lvert 01\rangle \leftrightarrow \lvert 10\rangle$, $\Delta k = 0$) and between sectors differing by two ($\lvert 00\rangle \leftrightarrow \lvert 11\rangle$, $\Delta k = \pm 2$). Together, these components promote amplitude propagation between Hamming-weight sectors ($\mathcal{H}_k$ to $\mathcal{H}_{k\pm 2}$), enabling access to the entire solution space.

In \autoref{app:extensionToFullXX}, we extend our toy-example analysis in \autoref{sec:toy-example} to $\XX$ interactions on a linear geometry. We deduce that for the state $\lvert \mathcal{T}\rangle$, which denotes the computational-basis state corresponding to the $k^{\star}$-element subset $\mathcal{T} \subseteq \{1,\dots,n\}$, the measurement probability becomes

\begin{align}
\label{eq:full-chain-prob_main}
    \mathbb{P}(\mathcal{T}) \;\approx\; |\omega_{\mathcal{T}}|^2\;\biggl(1 \;+\; 2\gamma\!\!\sum_{\substack{(j,\,j') \in E(G):\\j\in \mathcal{T},\; j'\notin \mathcal{T}}} \!\!\sin\!\Bigl(2\bigl(\phi_j - \phi_{j'}\bigr)\Bigr)\biggr),
\end{align}
where $\phi_j=\alpha v_j - \beta \mu_j$ denotes the effective rotation, $j \in \mathcal{T}$ is the included item, and $j' \notin \mathcal{T}$ is its excluded neighbor on the linear mixer graph.

In essence, equation~\eqref{eq:full-chain-prob_main} shows that the $\XX$ chain acts as a bank of parallel reduced-cost comparators.
Each edge $(j,\,j')$ in the mixer graph contributes a term $\sin\!\Bigl(2\bigl(\phi_j - \phi_{j'}\bigr)\Bigr)$ that tests whether keeping item~$j$ versus swapping it for item~$j'$ is favorable, given the current value--weight trade-off set by $\alpha$ and $\beta$.
If the included item has a more favorable value--weight balance ($\alpha v_j - \beta\mu_j > \alpha v_{j'} - \beta\mu_{j'}$), the contribution is positive and the probability of~$\mathcal{T}$ increases.
Conversely, when the excluded item is favored, amplitude flows away from~$\mathcal{T}$ toward the swapped configuration.
The total probability boost for a subset~$\mathcal{T}$ is the sum over all boundary edges, defined as edges with one endpoint in~$\mathcal{T}$ and one outside.
States whose included items consistently outperform their excluded neighbors accumulate positive contributions from multiple edges simultaneously, concentrating amplitude on high-quality solutions.

When a capacity constraint $\sum_{j\in \mathcal{T}}\mu_j \le C$ is present, the amplitude transfers on different boundary edges are no longer independent.
The marginal residual $\hat{r}(\mathcal{T}) = \sum_{j\in \mathcal{T}}\mu_j - C$ limits how much weight can be redistributed across the boundary of~$\mathcal{T}$: a swap that replaces item~$j$ with a heavier item~$j'$ is only feasible if $\mu_{j} - \mu_{j'} \ge \hat{r}(\mathcal{T})$.
At the circuit level, the phase separator encodes this through the weight penalty~$\beta$, where edges whose swap would violate the capacity bound receive a large negative phase shift, automatically suppressing their amplitude contribution in Eq.~\eqref{eq:full-chain-prob_main}.
Thus, the same mechanism identified in the two-state case, where the interplay between the phase gap~$\delta$ and the marginal residual $\hat{r}$, extends to the full chain. However, here the boundary edges collectively share a single capacity budget, coupling their amplitude transfers through the aggregate feasibility condition $\sum_{j\in \mathcal{T}}\mu_j \le C$.


\subsubsection{Circuit Parameters}
\paragraph{Loss function.} Although the structure of the ansatz naturally encodes feasibility through its phase and mixing dynamics, its performance ultimately depends on how the variational parameters are selected and optimized. Practically, key considerations include training stability and expressivity, and stability in particular is known to introduce barren plateau issues in variational schemes \cite{Larocca_2025}. Therefore, variational training is restricted to relatively small circuits with a loss function defined as

\begin{equation}
\label{eq:loss}
f(\bm x_{\bm \theta}) = -\bm v^\top\bm x_{\bm \theta} + \lambda \max(0, \sum_{m=1}^M L_m(\bm x_{\bm \theta}; \bm\mu_m, \bm\sigma_m) - C^{(m)}),
\end{equation}
where 
\begin{equation}
    L_m(\bm x_{\bm \theta}; \bm\mu_m, \bm\sigma_m) := \sum_{i=1}^n \mu_i^{(m)} x_i +
                    \Phi^{-1}(1-\epsilon^{(m)})\sqrt{\sum_{i=1}^n {\sigma_i^{(m)}}^2 x_i}
\end{equation} 
refers to the left hand side in \eqref{eq:socp-constraint}.  


Instead of minimizing the empirical mean, the optimizer employs conditional value‑at‑risk (CVaR$(\xi)$) \cite{barkoutsos2020improving} computed over samples collected from the quantum processor, with parameters updated using COBYLA \cite{Powell1994,2020SciPy-NMeth}. In general, CVaR may be defined with respect to either the lower or upper $\xi$‑tail of the cost distribution, corresponding to an emphasis on the most favorable or most adverse outcomes, respectively. CVaR$(\xi)$-based optimization is empirically observed to demonstrate improved convergence characteristics and enhanced robustness to hardware noise \cite{barkoutsos2020improving, Barron_2024}. In this work, we adopt the lower‑tail CVaR.

A close examination of the loss in \eqref{eq:loss} shows that the penalty term always contributes positively to the loss, allowing the multiplier $\lambda$ to be chosen as a sufficiently large positive value. However, the constraints may still be violated because the first term, $-\bm v^\top\bm x_{\bm \theta}$, is negative. Setting $\lambda = \sum_i v_i$ guards against the largest possible reduction in the loss that can be caused by the term $-\bm v^\top\bm x_{\bm \theta}$.

\paragraph{Parameter transfer.} For larger problem instances, optimized parameters are either directly transferred from smaller instances or employed to warm start a limited number of variational optimization steps on the larger circuits. The employed parameter transfer scheme takes into account characteristics of the constraints to align the transfer of parameters from circuits with fewer qubits ($\mathcal{C}_{\mathrm{src}}$) to circuits with a larger number of qubits ($\mathcal{C}_{\mathrm{tgt}}$). 

Each constraint $m$ is characterized by a feature vector 
\[
\bm{f}^{(m)} = \bigl[\bar{\mu}^{(m)},\; \bar{\sigma}^{(m)},\; r_C^{(m)}\bigr]^\top \in \mathbb{R}^3,
\]
constructed from per-constraint normalized weights. Specifically, the parameters $\mu_i^{(m)}$ and $\sigma_i^{(m)}$ are rescaled by the root-mean-square magnitude of $\{\mu_i^{(m)}\}_{i=1}^n$ within each constraint $m$, yielding scale-invariant statistics. Here, $\bar{\mu}^{(m)}$ and $\bar{\sigma}^{(m)}$ denote the mean normalized weight and standard deviation, and 
\[
r_C^{(m)} = \frac{\sum_{i=1}^n \mu_i^{(m)}}{C^{(m)}}
\]
is the capacity ratio, which remains invariant under this normalization.

Given feature matrices $\bm{F}_{\mathrm{src}}, \bm{F}_{\mathrm{tgt}} \in \mathbb{R}^{M \times 3}$, we solve the linear assignment problem $\pi^* = \arg\min_{\pi} \sum_{i=1}^{M} \|\bm{f}_{\mathrm{tgt}}^{(i)} - \bm{f}_{\mathrm{src}}^{(\pi(i))}\|_2$ via the Hungarian algorithm~\cite{Kuhn1955} in $\mathcal{O}(M^3)$ time, then permute the source circuit parameter $\bm{\beta}$ matrix by $\pi^*$ before transfer  (see \autoref{fig:ansatz}). This ensures each target constraint inherits parameters from the most structurally similar source constraint.

\subsection{Simulation and hardware runs}
With the circuit architecture and parameter transfer strategy established, the next step is to implement the resulting ansatz via matrix product state (MPS) simulations and gate‑based quantum hardware sampling. 

Simulations were carried out using the Qiskit \texttt{SamplerV2} \cite{qiskit-runtime} interface with the MPS backend \cite{vidal2003efficient, qiskit-aer}, which efficiently approximates circuit execution by truncating entanglement according to a fixed bond dimension. A maximum bond dimension of 512 and a truncation threshold of $10^{-6}$ were used to balance simulation accuracy and computational cost. All simulations matched the qubit count of the corresponding knapsack instances and employed fixed random seeds for reproducibility.

Hardware runs were conducted on IBM Heron processors, including \texttt{ibm\_fez}, \texttt{ibm\_pittsburgh}, and \texttt{ibm\_aachen}, with circuits characterized by 20-150 qubits that feature up to 3443 gates with total circuit depth up to 177. The two-qubit gates of the ansatz in \autoref{fig:ansatz} are arranged along a linear chain, and these entangling operations are then mapped to the native heavy-hex connectivity of the IBM quantum device. Additionally, the circuit is transpiled using the Qiskit Primitive V2 transpiler \cite{qiskit-runtime} with the optimization level set to 3. Dynamical decoupling is employed for error suppression using the default `\XX' sequence \cite{PhysRevApplied.20.064027}. The resulting transpiled circuit is then executed on the quantum processor using the Qiskit sampling primitive, with $N_{\mathrm{shots}}=2^{15}$ measurement samples collected.

\subsection{Self-consistent post-processing}
Samples produced by quantum hardware are inherently noisy, which can significantly degrade solution quality for constrained combinatorial optimization problems. To mitigate these effects, a novel problem‑aware classical post‑processing procedure is applied to the raw measurement outcomes. As described below and in pseudocode in \autoref{app:postProcessing}, this method iteratively refines noisy samples by exploiting problem structure and preferentially applying trial bit flips that lead to improved objective values while maintaining feasibility.

This method draws inspiration from the configuration recovery scheme used for error mitigation in quantum chemistry calculations \cite{Robledo-Moreno2025}. 
The procedure alternates between a residual-guided sample
repair step and updates to a \emph{selection-probability vector} $\bm{p}\in[0,1]^n$, where each component $p_i$ estimates the frequency with which item~$i$ appears in high-quality feasible solutions. This probability vector is
maintained by an outer iteration loop and fed back into the repair heuristic, closing a self-consistent feedback loop that progressively improves the recovery set. To make this procedure precise, the recovery problem can be formalized as follows.

\paragraph{Recovery problem.} Given the knapsack objective $f(\bm{x})=\bm{v}^\top\bm{x}$, following (\ref{eq:constraint-residual}) the constraint residual of constraint~$m$ is
\begin{equation}
r^{(m)}(\bm{x})
= 
\sum_{i=1}^{n}\mu_i^{(m)} x_i
+ \Phi^{-1}\!\bigl(1-\epsilon^{(m)}\bigr)\,
  \sqrt{\textstyle\sum_{i}(\sigma_i^{(m)})^2 x_i},
\;-\; C^{(m)}
\label{eq:residual}
\end{equation}
and a sample is feasible if and only if $r^{(m)}(\bm{x})\leq 0$ for all $m$.
Sample quality is measured by the penalty-augmented SOCP objective
\begin{equation}
    \mathcal{L}(\bm{x})=-f(\bm{x})+\lambda\sum_m g^{(m)}(\bm{x})
\end{equation}
where
$g^{(m)}(\bm{x}) = \max\bigl(0,r^{(m)}(\bm{x})\bigr)$ and
the repair heuristic can be viewed as greedily decreasing
$\mathcal{L}$ under single-bit moves.

\paragraph{Two-phase sample repair.}
At the start of each iteration, the sample pool is partitioned
into a feasible subset $\mathcal{S}_{\mathrm{feas}}$ and an
infeasible subset $\mathcal{S}_{\mathrm{infeas}}$ according to the
residuals~\eqref{eq:residual}. Every sample in
$\mathcal{S}_{\mathrm{feas}}\cup\mathcal{S}_{\mathrm{infeas}}$ is passed through a two-phase repair procedure that operates through single-bit flips. 

\emph{Phase~1 (feasibility restoration)} For samples drawn from $\mathcal{S}_{\mathrm{infeas}}$,
the first phase removes items by flipping selected bits from
$1$ to $0$ while any constraint is violated.
For each currently selected item~$i$, let $\bm{x}^{(-i)}$ denote the
vector obtained from $\bm{x}$ by setting $x_i$ to zero, and define
the relief as
\begin{equation}
\mathrm{relief}_i(\bm{x}) = \sum_m g^{(m)}(\bm{x}) -
\sum_m g^{(m)}(\bm{x}^{(-i)}).
\end{equation}
Among items with positive relief, Phase~1 flips the bit of the one
maximizing
\begin{equation}
\mathrm{score}_i
= \frac{\mathrm{relief}_i(\bm{x})}{\max(v_i,\,\delta)}
+ \eta\,(1-p_i),
\label{eq:repair-score}
\end{equation}
where $\delta>0$ is a small constant that prevents division by zero
for zero-value items and $\eta\geq 0$ sets the strength of the
probability bias.  The first term rewards relief per unit of lost
profit; the second nudges the heuristic toward flipping items that
are rarely selected in the current probability model (i.e., those
with low $p_i$).
If no bit flip yields positive relief, the phase terminates early,
leaving the sample unchanged.

\emph{Phase~2 (greedy improvement)} Phase~2 applies to every sample that
is feasible at the end of Phase~1:  samples that skipped Phase~1
drawn from $\mathcal{S}_{\mathrm{feas}}$ and
those repaired from $\mathcal{S}_{\mathrm{infeas}}$. For each sample, Phase~2 flips
unselected bits from $0$ to $1$ in decreasing order of $v_i$,
accepting each flip only if the resulting sample remains feasible. Since the square-root term in~\eqref{eq:residual} is non-separable, to determine feasibility, all $M$ constraints are checked. In other words, the marginal contribution of flipping item~$i$ from
$0$ to $1$ in constraint~$m$ is
\begin{equation}
\delta r_i^{(m)}(S) =
\mu_i^{(m)}+\Phi^{-1}(1-\epsilon^{(m)})[\sqrt{S^{(m)}+(\sigma_i^{(m)})^2}
-\sqrt{S^{(m)}}], \\
\end{equation}
with $S^{(m)}=\sum_j(\sigma_j^{(m)})^2 x_j$.

\paragraph{Objective-weighted probability update.}
After repair, feasible samples receive softmax weights
$w_k \propto \exp(f(\bm{x}_k)-f_{\max})$, where subtracting the
maximum profit $f_{\max}$ ensures numerical stability.  The
selection probabilities are then updated by an average
over $K$ random batches
$\mathcal{B}_1,\dots,\mathcal{B}_K$ of the samples:
\begin{equation}
p_i^{\mathrm{new}}
= \frac{1}{K}\sum_{\ell=1}^{K}
  \frac{\sum_{\bm{x}\in\mathcal{B}_\ell} w_{\bm{x}}\,x_i}
       {\sum_{\bm{x}\in\mathcal{B}_\ell} w_{\bm{x}}},
\qquad i=1,\dots,n.
\label{eq:prob-update}
\end{equation}
Batching reduces the variance of $\bm{p}^{\mathrm{new}}$ relative to
a single global weighted mean and makes the estimate more robust
when the feasible set is small.  If no repaired sample is feasible, as may occur during intial iterations for highly noisy pools, the full repaired set is used as a fallback to ensure that~\eqref{eq:prob-update} remains well-defined.

\paragraph{Self-consistent loop.}
The outer loop repeatedly (i)~repairs every sample in the current
pool, (ii)~recomputes weights and probabilities on the feasible
subset, and (iii)~re-partitions the repaired pool into feasible and
infeasible subsets for the next iteration.  No fresh samples are
drawn from $\bm{p}^{\mathrm{new}}$; the scheme is a deterministic
refinement of the original pool of samples, with the probabilities
entering the next repair step only through the
score~\eqref{eq:repair-score}.  Iteration stops when
$\Delta = \tfrac{1}{n}\|\bm{p}^{\mathrm{new}}-\bm{p}\|_1 <
\epsilon_{\mathrm{conv}}$ or after $\mathcal{T}$ iterations, with per-iteration cost $O(|\mathcal{S}|\cdot(F_{\mathrm{rem}}+F_{\mathrm{add}})\cdot
n\cdot M)$.  

\subsection{Computational cost.}

The quantum-classical scheme involves three steps. 

The \textbf{first} step, using COBYLA, optimizes circuit parameters $\bm \theta$ with an MPS simulator on a relatively small, surrogate problem with $N^\mathrm{S}_{\mathrm{eval}}$ function evaluations yielding $\bm \theta^S$ after convergence. 

The \textbf{second} step transfers $\bm \theta^S$ to the larger circuit corresponding to the target problem to be solved and subsequently tunes, again with COBYLA, the parameters with  $N_{\mathrm{eval}}$ ($N_{\mathrm{eval}} <\!\!< N^\mathrm{S}_{\mathrm{eval}}$) function evaluations either with an approximate MPS simulator or on quantum hardware yielding $\bm \theta^*$. 
This is followed by a final sampling run with the circuit initialized to $\bm \theta^*$.  
As a result, the large, target circuit's execution time is $\propto N_{\mathrm{shots}}N_{\mathrm{eval}}$ while the parameter training on the surrogate problem is a one-time $\propto N^{S}_{\mathrm{shots}}N^\mathrm{S}_{\mathrm{eval}}$, as the same parameters can be transferred to circuits corresponding to multiple target problems sharing similar features.

Finally, the \textbf{third} step consists of classical recovery run for $\mathcal{T}$ iterations using $|\mathcal{S}|$ unique bitstrings\footnote{There may be repetitions among the $N_{\mathrm{shots}}$ samples.} to refine noisy hardware samples obtained from the final sampling run on the circuit in the second step. 
As detailed in the prior section, the recovery step has a runtime  $ \propto \mathcal{T}\cdot|\mathcal{S}|$. 
Each recovery iteration performs $|\mathcal{S}|$ repair calls, each of which
evaluates at most $\mathcal{O}\bigl((F_{\mathrm{rem}}+F_{\mathrm{add}})\cdot
M\cdot n\bigr)$ constraint checks with a corresponding probability update at the cost of
$\mathcal{O}(|\mathcal{S}|\cdot n)$.


Overall, as hardware sample quality improves through additional parameter training and reduced noise, the recovery step is expected to become less resource‑intensive. Higher‑fidelity sample distributions reduce recovery costs in terms of both sample complexity and iteration count. The number of function evaluations $N_{\mathrm{eval}}$, which limits the extent of circuit refinement during recovery, therefore serves as a key control parameter for the total runtime.

\section{Results} \label{sec:results}

\begin{table}[h]
    \centering
    \begin{tabular}{|c|c|c|} \hline
        $n$ & $\mathbb{P}(t_o< 180)$ & $\bar{t}_r$ \\ \hline
         20 & 100\% & 0.1 \\
         40 & 97\% & 10 \\
         75 & 69\% & 73  \\
         100 & 56\% & 103 \\
         150 & 36\% & 125 \\ \hline
    \end{tabular}\hspace{0.1in}
    \begin{tabular}{|c|c|c|} \hline
        $M$ & $\mathbb{P}(t_o< 180)$ & $\bar{t}_r$ \\ \hline
         1 & 98\% & 7 \\
         5 & 87\% & 33 \\
         20 & 68\% & 71 \\
         50 & 55\% & 95 \\
         100 & 50\% & 106 \\ \hline
    \end{tabular}\\[0.1in]
    \begin{tabular}{|c|c|c|} \hline
        $\mu_{\max}$ & $\mathbb{P}(t_o< 180)$ & $\bar{t}_r$ \\ \hline
         5 & 52\% & 94 \\
         $\{25,50,100\}$ & 76\% & 51 \\ \hline
    \end{tabular}
    \caption{Statistics for average classical benchmarking runtimes $\bar{t}_r$ and the time taken to prove optimality $t_o$ using Gurobi, both times measured in seconds. $\mathbb{P}(t_o< 180)$ denotes the proportion of instances that were solved to proven optimality within 180 seconds. The optimality gap is defined as $f_{\mathrm{ub}} / f_{\mathrm{best}} - 1$, where $f_{\mathrm{ub}}$ and $f_{\mathrm{best}}$ denote the best upper bound and the best known feasible solution, respectively.}
    \label{tab:stats}
\end{table}

\begin{figure*}
 \begin{subfigure}[t]{.45\linewidth}
    \includegraphics[width=\linewidth]{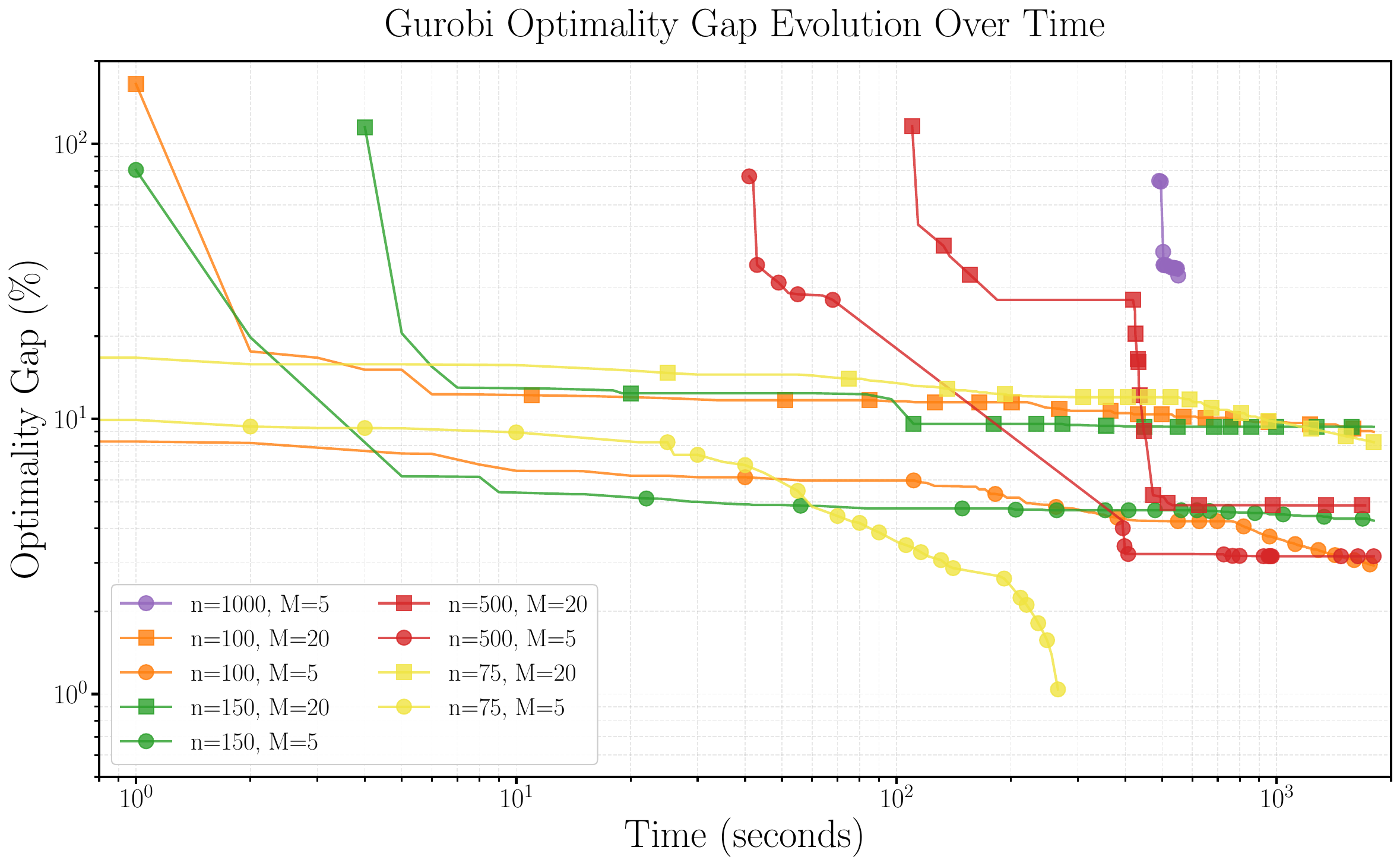}%
  \caption{}
  \label{fig: gurobi}
  \end{subfigure}\hfill                
\begin{subfigure}[t]{0.50\linewidth}
  \includegraphics[width=\linewidth]{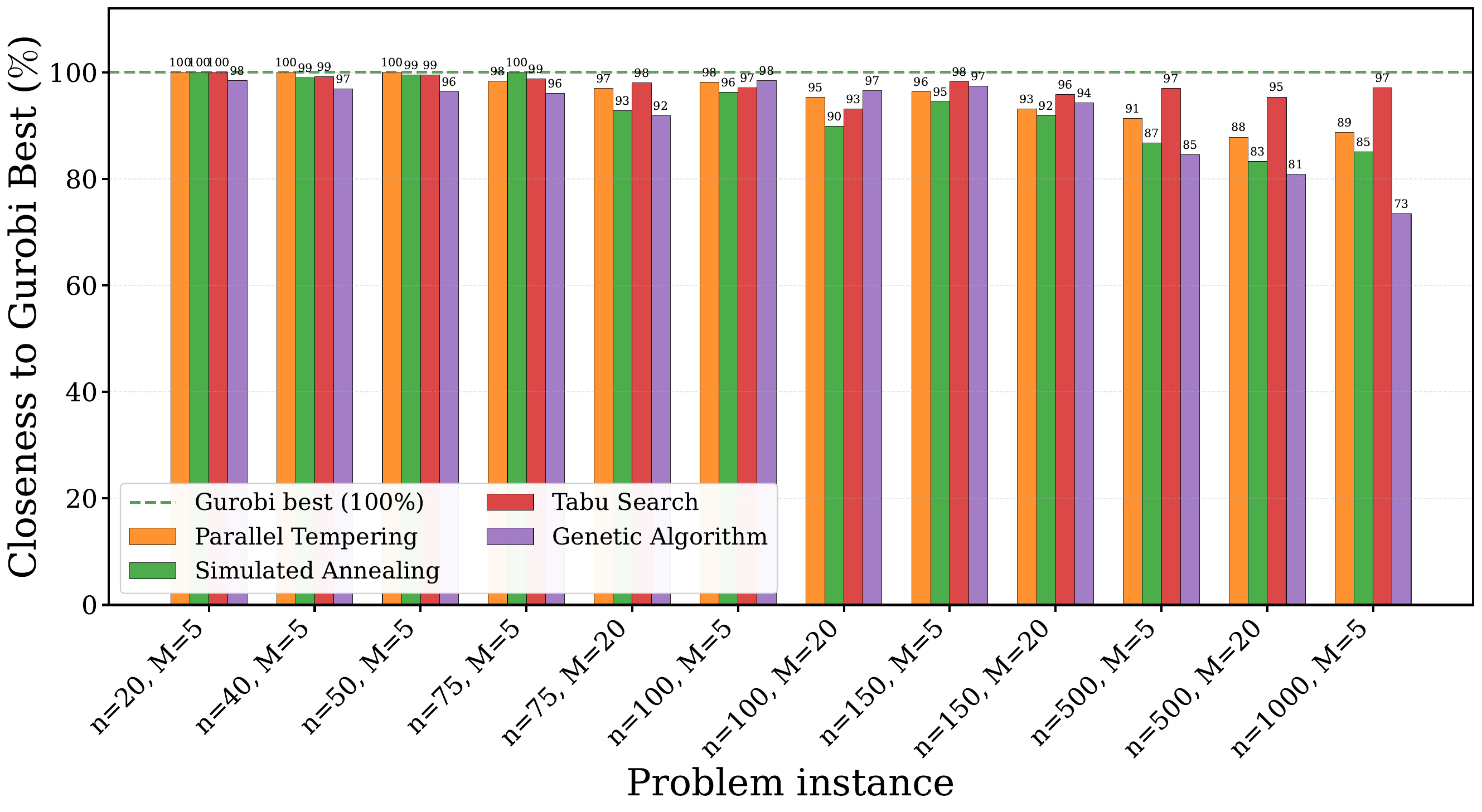}
  \caption{}
  \label{fig: heuristic}
\end{subfigure}
\label{fig:classical solver}
\caption{(a) Gurobi optimality gap evolution over time for different problem instances when Gurobi runtime is capped at 30 minutes. The plot shows gap convergence on a log-log scale, where different colors represent problem sizes ($n$) and marker shapes indicate the number of constraints ($M$): circles for $M=5$, squares for $M=20$. (b) Final solution quality comparison across all solvers after 1800 seconds for different problem instances. The y-axis shows closeness ($\kappa$) to Gurobi's best solution as a percentage, where 100\% (green dashed line) indicates matching Gurobi's performance. Note, to obtain the ``Gurobi-best" solutions, the 30 minute time limit was relaxed. Results are sorted by $n$ and $M$.}
\label{fig: classical benchmarking}
\end{figure*}

\paragraph{Classical benchmarking.} \autoref{tab:stats} presents average Gurobi runtime trends observed over different problem sizes $n$, number of constraints $M$, and maximum item weight mean $\mu_{\mathrm{max}}$ for the 3,900 instances discussed in section \autoref{sec:instances}. Overall, the results show a pronounced runtime inflection point at approximately 75 items to reach optimality, and problems with more constraints exhibit longer runtimes compared to those with fewer constraints.  

\autoref{fig: gurobi} presents Gurobi optimality gap evolution over time, capped to a maximum runtime of 30 minutes. These results are reported for ``hard" instances obtained by following a systematic scan over different parameters as described in \autoref{sec:instances}. The figure demonstrates that relatively large problems (i.e, $n\in\{500,1000\}$) maintain higher gaps even after 30 minutes, indicating difficulty continuously increasing with problem size. Conversely, for small knapsack instances, Gurobi consistently proves optimality within 30 minutes. For more difficult instances, the Gurobi-best solution is typically identified in under 30 minutes; however, extended runtimes and manual intervention--primarily through parameter tuning--are often required to reduce the optimality gap. For the most challenging instances, substantial user intervention is necessary to obtain the Gurobi-best solution, and runtimes up to 30 hours may be required to further tighten the gap. Although runs could be extended beyond this limit, diminishing returns were observed when attempting to fully close the optimality gap.

While Gurobi serves as an exact baseline, classical hardness is further characterized by comparing its performance against that of several classical heuristic solvers optimizing the same loss function (see Eq.~\eqref{eq:loss}). \autoref{fig: heuristic} indicates closeness $\kappa$ of heuristic solutions to the best solutions produced by Gurobi, where $\kappa = f_{\mathrm{solver}}/f_{\mathrm{best}} \times 100$ is defined as the best-so-far item-value ratio of the best feasible bitstring obtained by the classical heuristics $f_{\mathrm{solver}}$ to the best $f_{\mathrm{best}}$ obtained using Gurobi. The heuristics recover the proven optimum on the smallest instances ($n \leq 50$, $M=5$) and the best-performing heuristic remains within 5\% of the Gurobi reference on every instance considered. Performance nevertheless degrades with increasing item count and, more markedly, with the number of constraints: at $n \geq 500$ the spread across solvers widens considerably (e.g., 73--97\% at $n=1000$, $M=5$), with tabu search proving the most robust ($\kappa \geq 93\%$ throughout). Notably, across all problem instances, heuristics never exceed the Gurobi-best objective.

{
\setlength{\tabcolsep}{4.5pt}
\begin{table}[!hb]
\centering
\caption{Performance and circuit resources of the quantum scheme on knapsack instances of size $n$, with 5 constraints in most cases, except the instance labeled $100(20)$, which includes 20 constraints. Closeness $\kappa$ (\%) is measured against Gurobi's best solution; the quantum value is from the best sampled bitstring and the classical value is from the best-performing heuristic among Parallel Tempering, Tabu Search, Genetic Algorithm, and Simulated Annealing. Circuit metrics are the number of variational layers $p$, total circuit depth, two-qubit gate depth, and total gate count. For $n < 100$, $p$ was chosen by sweeping layer counts and selecting the value that minimized CVaR; for $n \geq 100$, $p$ was fixed by transferring parameters optimized on the 50-item instance.}
\label{tab:combined}
\begin{tabular}{c cc c cccc}
\toprule
& \multicolumn{2}{c}{Closeness $\kappa$ (\%)} & & \multicolumn{4}{c}{Circuit resources} \\
\cmidrule{2-3} \cmidrule{5-8}
$n$ & Quantum & \makecell{Classical \\ heuristic} & & $p$ & \makecell{Total\\ depth} & \makecell{2Q \\depth} & \makecell{Gate\\ count} \\
\midrule
20  & $100.0$ & $100.00$ & & 9 & 55  & 35  & 571  \\
40  & $100.0$ & $100.00$ & & 8 & 71  & 53  & 1032 \\
50  & $100.0$ & $100.00$ & & 7 & 77  & 61  & 1143 \\
75  & $98.6$  & $100.00$           & & 10 & 114  & 92  & 2390 \\
\midrule
100 & $97.2$           & $98.5$  & & 7 & 127 & 111 & 2293 \\
100(20) & $96.9$           & 96.5  & & 7 & 127 & 111 & 2293 \\
150 & $94.9$  & $98.3$           & & 7 & 177 & 161 & 3443 \\
\bottomrule
\end{tabular}
\footnotetext{Best classical heuristic per row: Parallel Tempering for $n \in \{20, 40, 50, 100\}$; Tabu Search for $n=75$; Simulated Annealing for $n=150$.}
\end{table}
}

\paragraph{Quantum-classical scheme.} \autoref{tab:combined} summarizes the performance of the proposed quantum scheme across knapsack instances of varying size, alongside the circuit resources required for each configuration. Solution quality is reported in terms of closeness $\kappa$ to Gurobi's optimum, enabling a direct comparison between the best sampled bitstring from the quantum solver and the strongest classical heuristic identified per instance among Parallel Tempering, Tabu Search, Genetic Algorithm, and Simulated Annealing.


For small instance sizes ($n \leq 50$), both the quantum scheme and the classical heuristic recover Gurobi’s best ($\kappa = 100\%$), indicating that these regimes are easily tractable for both approaches. At larger scales, the classical heuristic generally maintains a slight edge in solution quality. At $n = 75$, the classical method still achieves $\kappa = 100\%$, while the quantum scheme attains $98.6\%$. A similar trend persists at $n=100$ with 5 constraints and $n=150$, where the classical heuristic reaches $98.5\%$ and $98.3\%$, compared to $97.2\%$ and $94.9\%$ for the quantum scheme, respectively. However, the quantum approach marginally outperforms the classical heuristic in the more tightly constrained $n=100$ case with 20 constraints ($96.9\%$ vs.\ $96.5\%$). These results indicate that the quantum scheme remains competitive as problem size grows suggests that the quantum scheme is particularly competitive as constraint density increases. 

The circuit depths reported in \autoref{tab:combined} induce a rapidly increasing classical simulation cost when using MPS methods, since the runtime scales polynomially with both the circuit size and the MPS bond dimension \cite{Vidal_2003, schieffer2025harnessingcudaqsmpstensor}. For circuits of fixed depth, the dominant contributions to the runtime is governed by the bond dimension alone. Here, the circuit employed contains a linear, one-dimensional entanglement structure whose bond dimension grows linearly with the number of items in the knapsack instance. As a result, problem instances exceeding approximately 150 items quickly become prohibitive for classical MPS-based simulation, which contrasts the relatively constant execution times of the same circuits on quantum hardware. Consequently, while larger instances beyond those considered in this work become classically intractable, the corresponding quantum circuits remain practically executable, provided sufficiently capable hardware is available.


\begin{figure}[!htb]
    \centering
    \includegraphics[trim={0cm 0cm 0cm 1.5cm}, clip, width=0.8\linewidth]{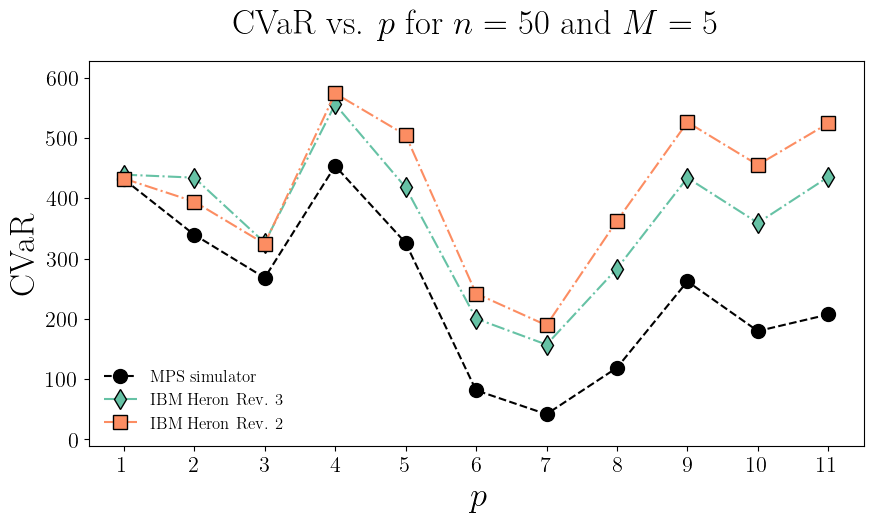}
    \caption{CVaR as a function of circuit depth $p$ for a 50 item, 5 constraint knapsack instance.}
    \label{fig:cvar_hardware}
\end{figure}

To investigate hardware constraints, \autoref{fig:cvar_hardware} compares the CVaR as a function of circuit depth $p$ across the MPS simulator and two IBM Heron processors, Revision 2 and Revision 3, for a 50 item problem with 5 constraints. Overall, increasing $p$ up to $p=7$ improves CVaR, though the trend is non-monotonic, consistent with the preceeding analysis. Across all $p$, hardware implementations exhibit higher CVaR values than their noiseless simulation counterparts, as expected due to noise. Beyond $p >7$, hardware noise begins to dominate, leading to unreliable CVaR values and, in some cases, degradation in the performance of the scheme.

\begin{figure}
    \centering
    \includegraphics[width=0.7\linewidth]{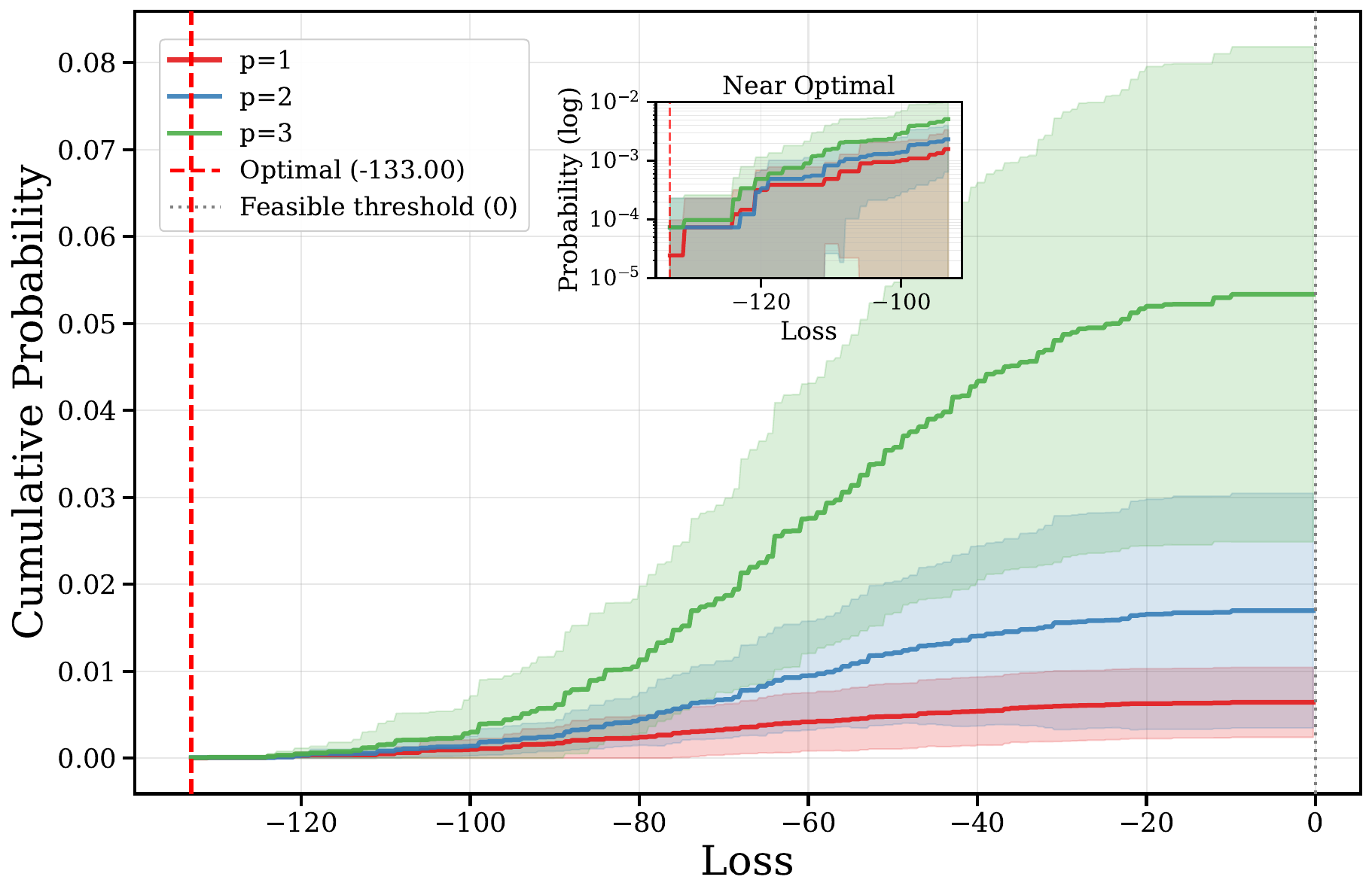}
    \caption{Empirical CDF of the loss $L$ for circuit depths $p = 1, 2, 3$ on the 20 item, 5 constraint CCKP instance. The plot displays the mean CDF with shaded
confidence regions ($\pm 1$ standard deviation over 10 independent runs of the variational scheme) for the feasible region ($L < 0$) under the chosen penalty $\lambda$. An inset with logarithmic y-axis focuses on the near-optimal region
($-140 \leq L \leq -93$), highlighting the probability of achieving solutions close to the optimum. The vertical dashed line at $L = -133$ marks the optimal solution, while the dotted line at $L = 0$ indicates the feasibility threshold.}
    \label{fig:p_dependence}
\end{figure}

\paragraph{Empirical cumulative distribution function.} \autoref{fig:p_dependence} depicts the empirical CDF constructed from the sampled solutions. Feasible samples are identified by negative loss values under Eq.~\eqref{eq:loss}, whereas samples with positive loss are infeasible. The samples were collected from a variational quantum circuit trained using MPS simulator corresponding to a 20 item problem with 5 constraints. The standard deviation was computed by averaging 10 independent runs of the
variational scheme. As the circuit depth and corresponding circuit complexity increases, the likelihood of obtaining feasible solutions ($L < 0$) improves:
for $p=1$ the probability is $0.6\% \pm 0.4\%$, for $p=2$ it is $1.7\% \pm 1.4\%$, and for $p=3$ it reaches $5.4\% \pm 2.9\%$. Additionally, the inset in \autoref{fig:p_dependence} shows that the optimal solution at loss $L = -133$ is sampled with measurement probability of $\sim 10^{-4}$ for $p=3$.


\begin{figure*}
 \begin{subfigure}[t]{.50\linewidth}
    \includegraphics[trim={0cm 0cm 0cm 1.45cm}, clip, width=\linewidth]{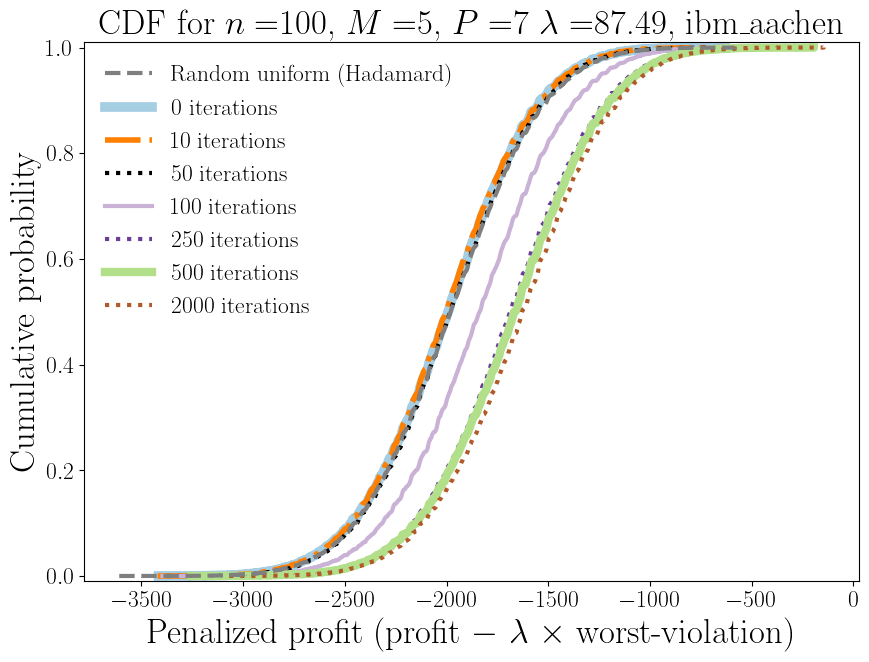}%
  \caption{}
  \label{fig: cdf evolution iter}
  \end{subfigure}\hfill                
\begin{subfigure}[t]{0.50\linewidth}
  \includegraphics[trim={0cm 0cm 0cm 0cm}, clip, width=\linewidth]{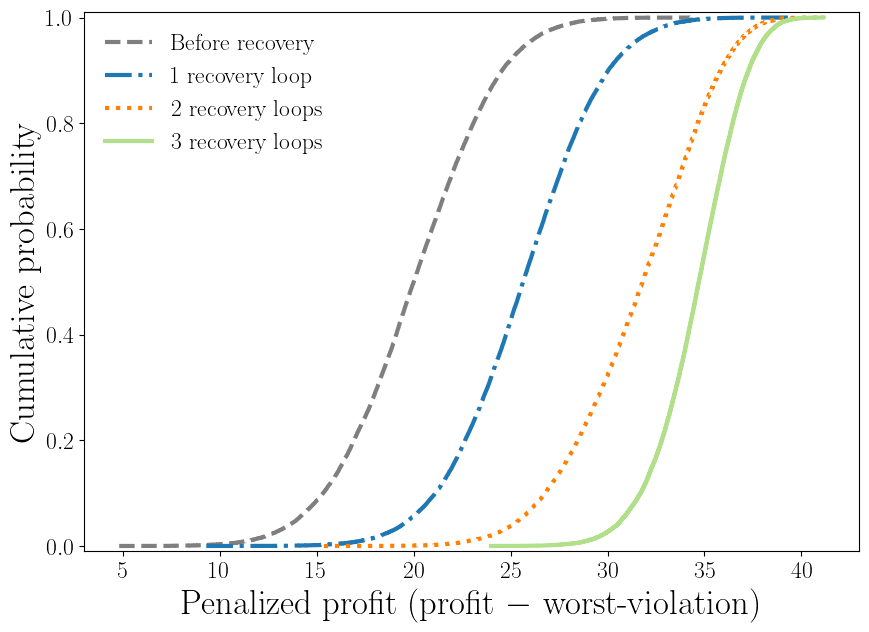}
  \caption{}
  \label{fig: cdf recovered evolution}
\end{subfigure}
\caption{Empirical cumulative distributions of penalized profit for a 100‑item, 5‑constraint knapsack problem obtained from hardware‑sampled quantum circuits. (a) Comparison of a random (Hadamard) baseline and increasing numbers of fine‑tuning iterations following parameter transfer from a 50‑item, 5‑constraint knapsack. Profit is penalized using a $\lambda$‑scaled worst‑case SOCP constraint violation, where $\lambda$ matches the value employed during CVaR‑based training. Successive fine‑tuning stages induce a progressive rightward shift of the distributions that saturates at 500 iterations (green), indicating increasing concentration of probability mass on higher‑quality solutions under the training objective. (b) Empirical distributions of feasibility‑aware profit before and after applying a fixed number of recovery‑algorithm iterations. Here, profit is penalized by the unscaled worst‑case SOCP violation, reflecting direct constraint enforcement independent of the CVaR training penalty.}
\label{fig: cdf evolution}
\end{figure*}

\begin{figure}[!htbp]          
    \centering
    \begin{subfigure}[b]{0.48\textwidth}
        \centering
        \includegraphics[trim={18cm 0cm 4cm 10cm}, clip, width=1.05\textwidth]{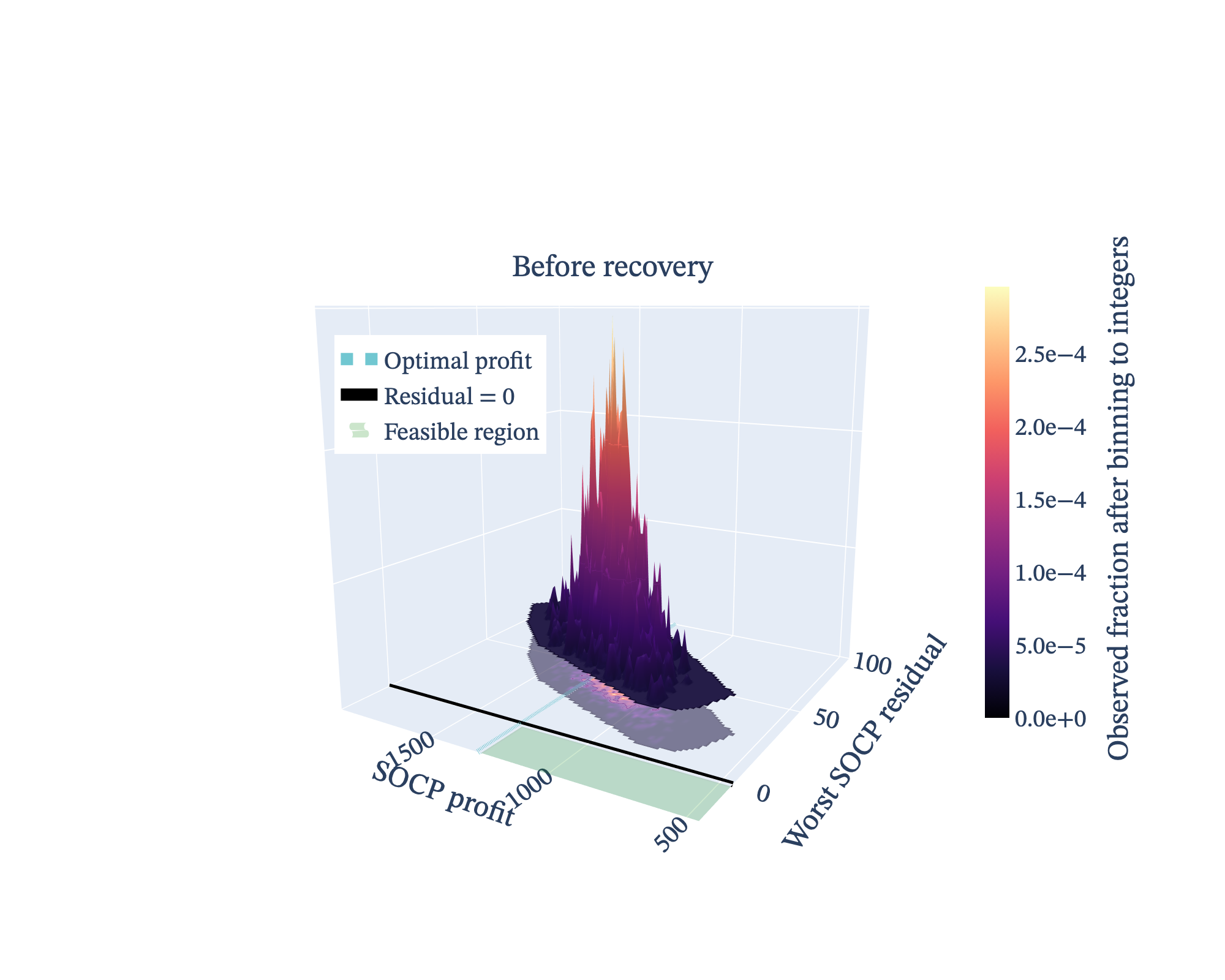}   
        \vspace{0.2cm}                                   
        \includegraphics[trim={0cm 0cm 0cm 3cm}, clip, width=\textwidth]{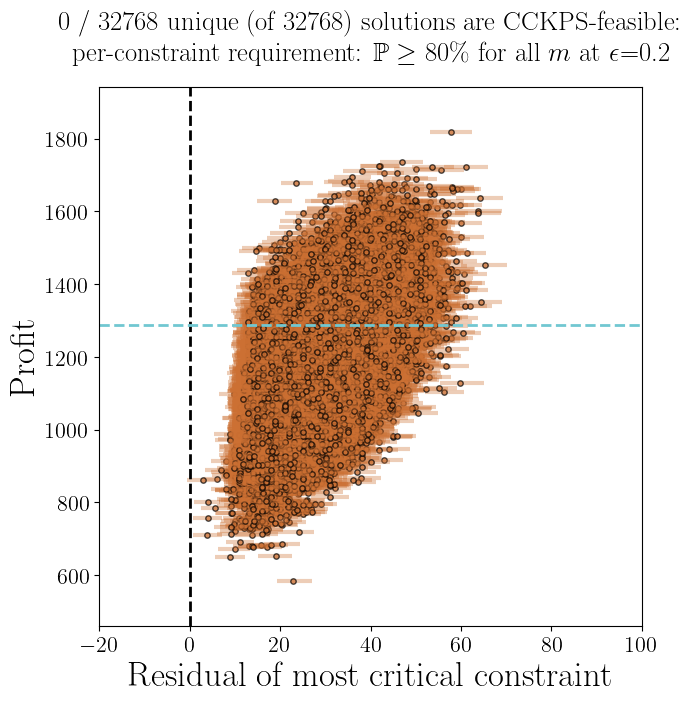}   
        \caption{}
        \label{fig:panelA}
    \end{subfigure}
    \hfill   
    \begin{subfigure}[b]{0.48\textwidth}
        \centering
        \includegraphics[trim={18cm 0cm 4cm 0cm}, clip, width=1.05\textwidth]{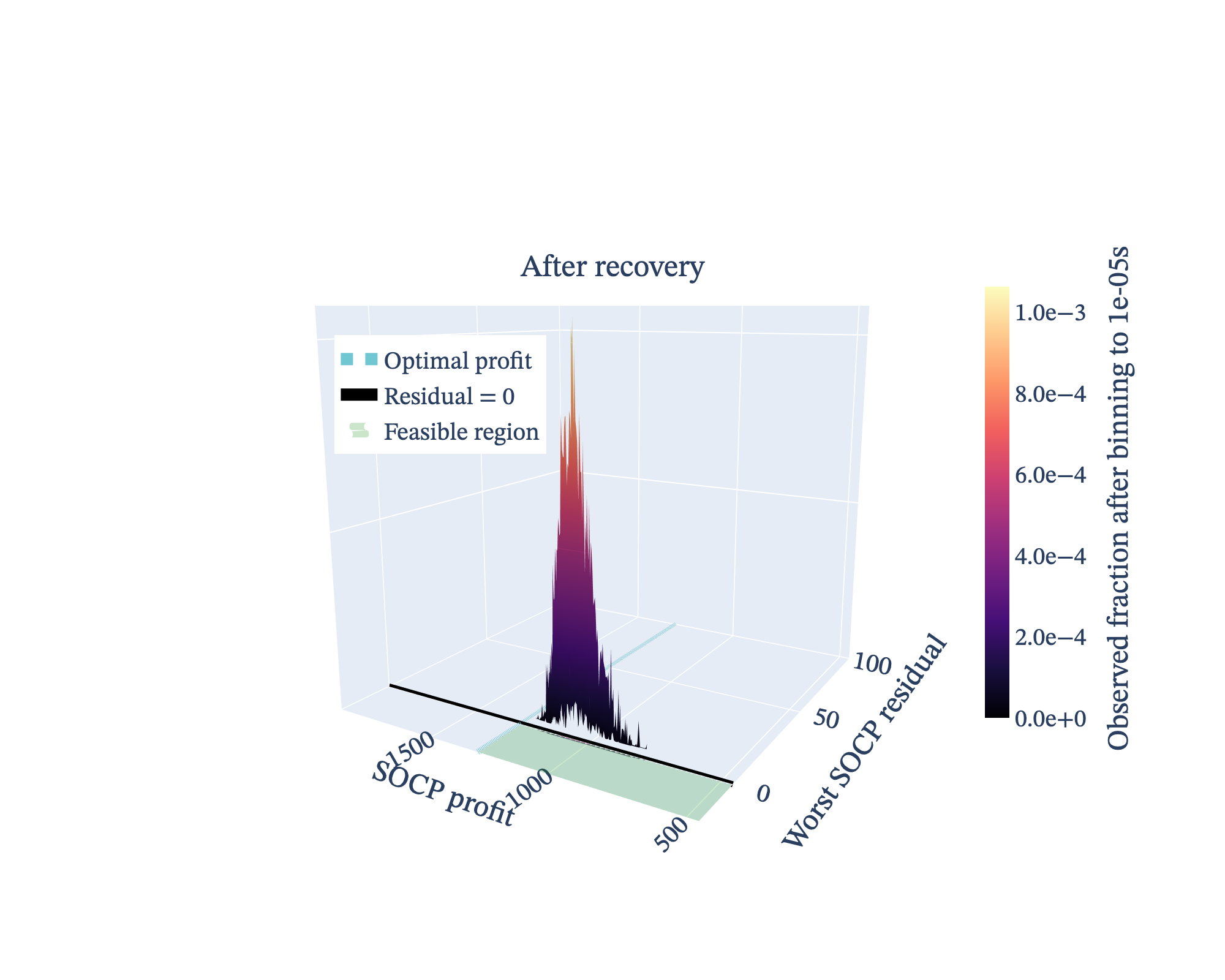}   
        \vspace{0.2cm}
        \includegraphics[trim={0cm 0cm 0cm 3cm}, clip, width=\textwidth]{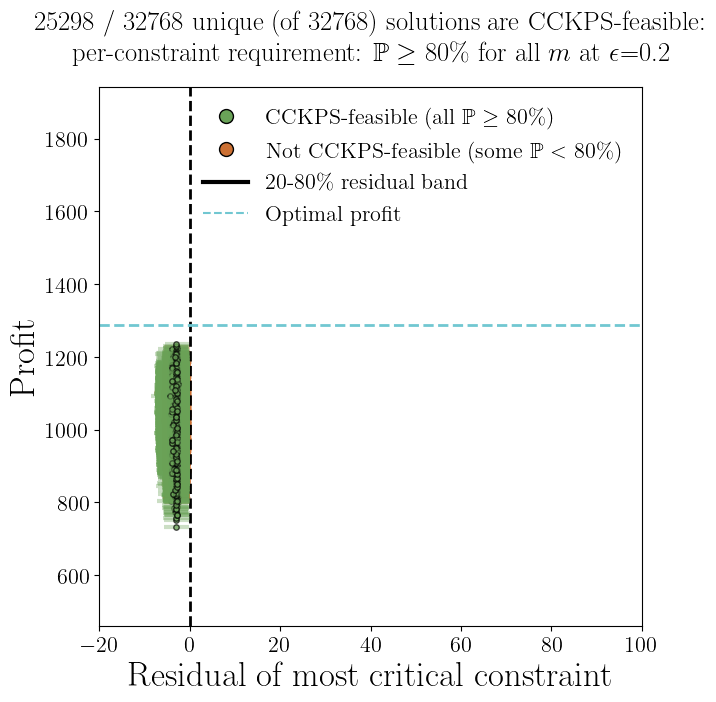}   
        \caption{}
        \label{fig:panelB}
    \end{subfigure}

\caption{Visualizations of observed bitstrings for a 100‑item, 5‑constraint knapsack instance (a) before and (b) after recovery. Top: SOCP surface plots of bitstrings showing profit versus worst SOCP residual, with the colorbar representing observed hardware sampling probability. The blue line represents the classical best solution profit, and its intersection with the black line ($r^{(*)}=0$) outlines the SOCP-feasible zone in the $xy$-plane (green rectangle). Bottom: CCKP scatter plots of bitstrings showing profit versus CCKP critical residuals with the highest violation probability. The dashed blue line once again indicates the classically best profit. Each point corresponds to a unique bitstring colored by CCKP-sampled (CCKPS) feasibility, and error bars denote the 20–80\% range ($\epsilon = 0.2$) of sampled residuals $r^{(*)}$ from 1000 band samples. After recovery, nearly all bitstrings are CCKPS-feasible, approaching equivalence of CCKPS-feasibility to SOCP-feasibility under sufficient band samples.}
    \label{fig:residual}
\end{figure}

Furthermore, \autoref{fig: cdf evolution} demonstrates how the loss function evolves across different optimization steps for circuits with 100 qubits, corresponding to a problem instance with 100 items and 5 constraints. \autoref{fig: cdf evolution iter} illustrates the evolution of the loss function during the circuit fine-tuning step, which is displayed as empirical CDFs of penalized profit. Samples were obtained from hardware after classical fine-tuning of varying maximum number of iterations\footnote{Using COBYLA, the term iteration here implies a function evaluation.} with the MPS simulator. Specifically, the case of 0 iterations represents a circuit initialized by direct parameter transfer from a 50‑qubit instance, whereas iteration counts greater than 0 reflect additional fine‑tuning steps carried out using the MPS simulator. The calculation of the penalized profit was set by subtracting a $\lambda$‑weighted worst‑case SOCP constraint violation, with $\lambda$ fixed to the value used during CVaR-based training to ensure consistency between the training objective and post‑hoc evaluation. 



The warm‑start (0 iteration) empirical distribution initially performs similarly to a random uniform baseline, with its effectiveness depending sensitively on the choice of circuit depth $p$ used for parameter transfer. Parameter transfer nonetheless provides a clear benefit by stabilizing and accelerating CVaR convergence, suggesting improved guidance of the optimization trajectory even when immediate gains in solution quality are limited. As shown in \autoref{fig: cdf evolution iter}, the fine-tuned distributions begin to outperform the random baseline after approximately 10 iterations, as indicated by a rightward shift in the CDF. However, for fewer than 100 fine-tuning iterations, these improvements remain relatively modest
, partially because COBYLA expects to perform at least $P\cdot(2+M)+2$ function evaluations.

As additional fine‑tuning stages are applied, the distributions exhibit a monotonic rightward shift, indicating a systematic reallocation of probability mass toward more profitable solutions. This improvement begins to plateau around 500 iterations, beyond which no further changes in the CDF are observed. The observed behavior reflects improved constraint satisfaction in the tail of the sampled distribution, together with increased alignment with the risk‑averse training objective imposed by the CVaR loss. 

\autoref{fig: cdf recovered evolution} illustrates the evolution of feasibility-aware profit distributions under successive applications of the recovery algorithm to the 100 item, 5 constraint knapsack after fine-tuning iterations plateaued at 500 iterations. In this figure, penalized profit aligns with the recovery penalty, which, in contrast to \autoref{fig: cdf evolution iter}, does not incur a scaling coefficient to the worst-case SOCP violation. The initial distribution corresponds to solutions directly obtained from hardware sampling and contains nonzero mass violating constraints. With increasing recovery iterations, the violating mass progressively redistributes toward higher penalized profit values, indicating the iterative correction of infeasible bitstrings. The dominant transformation occurs after the first recovery loop, which induces the largest rightward shift in the distribution. Subsequent recovery, particularly at three loops, contributes to variance reduction, tightening the distribution while shifting its mean to a lesser extent. After three loops, all bitstrings are SOCP-feasible, terminating the recovery step.

\paragraph{SOCP and CCKP relationship.} To compare feasibility assessed under the SOCP relaxation and under CCKP sampling, \autoref{fig:residual} displays visualizations of hardware-sampled results before and after the application of sample recovery on the 100 item, 5 constraint knapsack. The two surface plots depict SOCP-feasibility for bitstrings before and after applying the recovery algorithm and demonstrate how the initially broad distributions produced by hardware sampling systematically coalesces probability mass near the intersection of low constraint residuals and the classically determined optimal profit. Complementary, the two scatter plots assess feasibility under CCKP sampling (CCKPS), with sample weights drawn from the 20–80\% range of the corresponding CCKP distribution with 1000 band samples. The band sampling range was selected as $[\epsilon, 1-\epsilon]$ to match the SOCP formulation. 

Taken together, these surface and scatter plots enable a direct comparison between SOCP‑feasible and CCKPS‑feasible solutions. After recovery, the majority of bitstrings agree on feasibility under both formulations, and as the number of CCKPS band samples increases, this agreement improves monotonically. This convergence indicates that dense CCKP sampling recovers the same feasible set identified by the SOCP relaxation, while the recovery procedure drives the distribution toward constraint‑saturating, high‑profit configurations near the SOCP/CCKPS-feasibility boundary.

\section{Conclusions and Outlook}
\label{sec:conclusions}
In this work, we introduced a novel quantum–classical framework for solving CCKPs, motivated by stochastic decision-making challenges that arise in insurance underwriting and related domains. 
By reformulating probabilistic constraints into deterministic equivalents and leveraging a QAOA-based variational quantum circuit, we demonstrated how current quantum devices can be used as effective heuristic samplers for stochastic optimization. 
The proposed scheme integrates variational training, parameter transfer across problem sizes, and problem-aware classical post-processing to overcome known challenges such as barren plateaus \cite{Larocca_2025, mcclean2018barren} and hardware noise. Overall, empirical results show that the quantum-generated samples concentrate probability mass on feasible, high-quality solutions, while maintaining moderate circuit depths.

These findings indicate a promising regime in which our proposed quantum-classical scheme can potentially yield substantial gains over purely classical approaches. Although the method consistently produces higher‑quality solutions, the fine‑tuning stage currently requires a relatively high number of optimizer iterations. Beyond parameter transfer, alternative ``offline" circuit‑parameter training strategies may help reduce this cost and will be explored in future work. Moreover, the ability to address larger, more constrained problems is expected to widen the performance gap between classical and quantum solvers, motivating continued investigation of scalable solution methods.

Beyond the specific knapsack formulation studied here, the methods point toward a more general paradigm for applying quantum computing to real-world stochastic optimization problems. 
Combining a specific structure in quantum circuits, risk-aware objective formulations, and robust classical post-processing provides a flexible template that can be adapted to other chance-constrained and uncertainty-driven decision problems. 
Although current hardware limitations restrict problem sizes and achievable depths, the results presented here suggest a pathway for scaling as quantum devices and error-mitigation techniques improve. 
From an industry perspective, this work highlights how hybrid quantum–classical approaches can be meaningfully aligned with business-relevant risk–reward trade-offs, offering a promising direction for future research and experimentation at the intersection of quantum computing and enterprise optimization.
\bibliographystyle{unsrt}
\bibliography{ref}
\newpage
\appendix
\section*{Appendix}
\label{appendix}
\addcontentsline{toc}{section}{Appendix}
\section{Derivation} \label{app:derivation}
We restate Theorem~\ref{theorem:1} for convenience.

\knapsacktheorem*

\begin{proof}
Let the total weight of the selected items corresponding to a constraint $m$ be $B^{(m)} = \sum_{i=1}^{n} \tilde{w}_i^{(m)} x_i$. Each term $\tilde{w}_i^{(m)} x_i$ is either $0$ or $\tilde{w}_i^{(m)}$, and the random variables $\tilde{w}_i^{(m)}$ are independent. Since the distribution $\Delta$ is closed under convolution, the random variable $B^{(m)}$ also follows the distribution $\Delta$.
Because $x_i \in \{0,1\}$, it follows that $x_i^2 = x_i$, and thus the variance of $B^{(m)}$ simplifies to $\operatorname{Var}(B^{(m)}) = \sum_{i=1}^{n} {\sigma_i^{(m)}}^2 x_i$. Consequently,

$$B^{(m)} \sim \Delta\bigl(\mu^{(m)}(\bm{x}), {\sigma^{(m)}}^2(\bm{x})\bigr)$$
where
$\mu^{(m)}(\bm{x}) = \sum_{i=1}^n \mu^{(m)}_i x_i$, 
and
$\sigma^{(m)}(\bm{x}) = \sqrt{\sum_{i=1}^n {\sigma_i^{(m)}}^2 x_i}$.

We impose the chance constraint
\[
\mathbb{P}(B^{(m)} \leq C^{(m)}) \geq 1-\epsilon.
\]

By standardizing $B^{(m)}$, we obtain
\[
\frac{B^{(m)} - \mu^{(m)}(\bm{x})}{{\sigma^{(m)}}(\bm{x})} \sim \Delta(0,1).
\]
Hence
\begin{align*}
\mathbb{P}(B^{(m)} \leq C^{(m)}) 
&= 
\mathbb{P}\!\left( \frac{B^{(m)} - \mu^{(m)}(\bm{x})}{{\sigma^{(m)}}(\bm{x})} 
\leq 
\frac{C^{(m)} - \mu^{(m)}(\bm{x})}{{\sigma^{(m)}}(\bm{x})} \right) \\
&= 
\Phi\!\left( \frac{C^{(m)} - \mu^{(m)}(\bm{x})}{{\sigma^{(m)}}(\bm{x})} \right),
\end{align*}
where $\Phi(\cdot)$ denotes the cumulative distribution function (CDF) of the standard normal distribution. Additionally, we define

$$b := \frac{C^{(m)} - \mu^{(m)}(\bm{x})}{\sigma^{(m)}(\bm{x})}.$$ 
Then, the constraint $\Phi(b) \geq 1-\epsilon$ implies
\[
b \geq \Phi^{-1}(1-\epsilon)
\quad \Rightarrow \quad
\frac{C^{(m)} - \mu^{(m)}(\bm{x})}{{\sigma^{(m)}}(\bm{x})} \geq \Phi^{-1}(1-\epsilon),
\]
where this inversion is valid because the CDF $\Phi$ is strictly increasing, which allows us to apply its inverse to both sides of the inequality $\Phi(b) \geq 1-\epsilon$.

Assuming $\sigma(\bm{x}) > 0$ and $\Phi^{-1}(1-\epsilon) > 0$ (equivalently, $1-\epsilon > 0.5$), we can rearrange the inequality to obtain
\[
C^{(m)} - \mu^{(m)}(\bm{x}) \geq \Phi^{-1}(1-\epsilon)\,{\sigma^{(m)}}(\bm{x}),
\]
that is,
\[
\sum_{i=1}^n \mu^{(m)}_i x_i 
+ 
\Phi^{-1}(1-\epsilon) \sqrt{\sum_{i=1}^n {\sigma_i^{(m)}}^2 x_i} 
\leq C^{(m)}.
\]

\end{proof}

\section{Extension to the full \texorpdfstring{$\XX$}{XX} chain}
\label{app:extensionToFullXX}

The two-state analysis \autoref{sec:toy-example} extends to the full mixer acting
on all $\binom{n}{k^{\star}}$ states in the feasible sector.
We work within the Hamming-weight sector $\mathcal{H}_{k^{\star}}$ and derive the
\emph{conditional} probability of measuring a given $k^{\star}$-subset, given
that the measurement outcome has Hamming weight~$k^{\star}$.
In our scheme, the penalty term in the loss function concentrates amplitude in
the feasible sector during optimization; the analysis below characterizes how
the mixer redistributes that amplitude among feasible solutions.

\smallskip\noindent\textit{State after the phase separator.}\;
Let $\mathcal{T} \subseteq \{1,\dots,n\}$ denote a $k^{\star}$-element subset, and let
$\lvert \mathcal{T}\rangle$ denote the corresponding computational-basis state.
Since the phase separator is diagonal with effective rotation
$\phi_j = \alpha v_j - \beta \mu_j$ on qubit~$j$, the eigenvalue associated with
$\mathcal{T}$ is
\begin{equation}
\label{eq:hp-eigenvalue}
    E_\mathcal{T} \;=\; \sum_{j \notin \mathcal{T}} \phi_j \;-\; \sum_{j \in \mathcal{T}} \phi_j .
\end{equation}
Starting from a uniform superposition over $k^{\star}$-subsets,
\[
\lvert\psi_0\rangle = \sum_\mathcal{T} \omega_{\mathcal{T}} \lvert \mathcal{T}\rangle,
\qquad
\omega_\mathcal{T} = \binom{n}{k^{\star}}^{-1/2},
\]
the phase separator produces
\begin{equation}
    \lvert\psi_1\rangle = \sum_\mathcal{T} \omega_\mathcal{T}\, e^{-i E_\mathcal{T}} \lvert \mathcal{T}\rangle .
\end{equation}
Probabilities are unchanged:
$\lvert\langle \mathcal{T} \lvert\psi_1\rangle\rvert^2 = |\omega_\mathcal{T}|^2$
for all~$\mathcal{T}$.

\smallskip\noindent\textit{First-order mixer expansion.}\;
The projected mixer within $\mathcal{H}_{k^{\star}}$ is
$H_{\mathrm{mix}}^{(k^{\star})} = -\sum_{(j,j') \in E(G)} T_{jj'}$,
where $T_{jj'}$ is the swap operator and $E(G)$ is the edge set of the mixer graph.
Expanding the mixer unitary to first order in~$\gamma$,
\begin{equation}
\label{eq:first-order-mixer}
    \lvert\psi_2\rangle
    = e^{-i\gamma H_{\mathrm{mix}}^{(k^{\star})}}\lvert\psi_1\rangle
    \;\approx\;
    \lvert\psi_1\rangle
    + i\gamma \sum_{(j,j') \in E(G)} T_{jj'}\lvert\psi_1\rangle .
\end{equation}

\smallskip\noindent\textit{Amplitude of a target subset~$\mathcal{T}$.}\;
We compute $\langle \mathcal{T} \lvert\psi_2\rangle$ by evaluating
$\langle \mathcal{T} \lvert T_{jj'} \lvert\psi_1\rangle$ for each edge.
For a fixed target $\mathcal{T}$ and edge $(j,j')$, the only source subset $\mathcal{T}'$ such that
$T_{jj'}\lvert \mathcal{T}'\rangle = \lvert \mathcal{T}\rangle$
is $\mathcal{T}' = T_{jj'}(\mathcal{T})$.
There are two cases:
\begin{enumerate}
    \item \textbf{Non-boundary edge} (both $j,j' \in \mathcal{T}$ or both
    $j,j' \notin \mathcal{T}$): the swap acts as the identity on
    $\lvert \mathcal{T}\rangle$, so $\mathcal{T}'=\mathcal{T}$ and
    \begin{equation}
        \langle \mathcal{T} \lvert T_{jj'} \lvert\psi_1\rangle
        = \omega_\mathcal{T}\, e^{-i E_\mathcal{T}}.
    \end{equation}

    \item \textbf{Boundary edge} ($j \in \mathcal{T},\; j' \notin \mathcal{T}$): the swap maps
    $\lvert \mathcal{T}'\rangle \to \lvert \mathcal{T}\rangle$ with
    \[
    \mathcal{T}' = \mathcal{T} \setminus \{j\} \cup \{j'\},
    \]
    giving
    \begin{equation}
        \langle \mathcal{T} \lvert T_{jj'} \lvert\psi_1\rangle
        = \omega_{\mathcal{T}'}\, e^{-i E_{\mathcal{T}'}} .
    \end{equation}
\end{enumerate}
For a boundary edge, the eigenvalue difference follows from
\eqref{eq:hp-eigenvalue}. In $\mathcal{T}$, item $j$ contributes $-\phi_j$ and $j'$ contributes
$+\phi_{j'}$, while in $\mathcal{T}'$ these roles are reversed. Therefore,
\begin{equation}
\label{eq:eigenvalue-diff}
    E_{\mathcal{T}'} - E_\mathcal{T} = 2(\phi_j - \phi_{j'}) .
\end{equation}

Collecting all edges and using $\omega_\mathcal{T}= \omega_{\mathcal{T}'}$,
\begin{equation}
\label{eq:collected-amplitude}
    \langle \mathcal{T} \lvert\psi_2\rangle
    = \omega_\mathcal{T}\, e^{-i E_\mathcal{T}}\biggl[
        1 + i\gamma\, n_0(\mathcal{T})
        + i\gamma \!\!\sum_{\substack{(j,j') \in E(G)\\ j\in \mathcal{T},\; j'\notin \mathcal{T}}}
        e^{-2i(\phi_j - \phi_{j'})}
    \biggr],
\end{equation}
where $n_0(\mathcal{T})$ is the number of non-boundary edges.

\smallskip\noindent\textit{Measurement probability.}\;
Taking the modulus squared to first order in~$\gamma$,
\begin{equation}
    \mathbb{P}(\mathcal{T})
    = \bigl|\langle \mathcal{T} \lvert\psi_2\rangle\bigr|^2
    \approx
    \omega_\mathcal{T}^2\Bigl(
        1 + 2\,\mathrm{Re}\bigl[
        i\gamma\, n_0(\mathcal{T})
        + i\gamma \textstyle\sum e^{-2i(\phi_j - \phi_{j'})}
        \bigr]
    \Bigr).
\end{equation}
The non-boundary term is purely imaginary and does not contribute.
For each boundary edge, using
$\mathrm{Re}(i e^{-i\theta}) = \sin\theta$,
\begin{equation}
    \mathrm{Re}\!\left(i\gamma\, e^{-2i(\phi_j - \phi_{j'})}\right)
    = \gamma\,\sin\!\bigl(2(\phi_j - \phi_{j'})\bigr).
\end{equation}
Substituting $\phi_j = \alpha v_j - \beta \mu_j$ yields
\begin{equation}
\label{eq:full-chain-prob}
    \mathbb{P}(\mathcal{T})
    \;\approx\;
    |\omega_\mathcal{T}|^2
    \biggl(
        1
        + 2\gamma
        \!\!\sum_{\substack{(j,\,j') \in E(G):\\ j\in \mathcal{T},\; j'\notin \mathcal{T}}}
        \sin\!\Bigl(
        2\bigl[\alpha(v_j - v_{j'})
        - \beta(\mu_j - \mu_{j'})\bigr]
        \Bigr)
    \biggr),
\end{equation}
where $j \in \mathcal{T}$ denotes an included item and $j' \notin \mathcal{T}$ its excluded neighbor on
the mixer graph.

\section{Post-Processing Algorithm} \label{app:postProcessing}
 
The post-processing scheme takes a set of binary
samples as input. These samples can be potentially noisy outputs from a quantum device, and the recovery scheme iteratively refines them through three stages: (i)~objective-based
weighting, (ii)~batched probability update, and (iii)~two-phase
sample repair.  The overall procedure is self-consistent: repaired
samples feed back into the probability estimates, which in turn guide
the next round of repairs.
 
\subsection{Objective-Based Weighting}
 
Feasible samples are assigned exponential (softmax) weights proportional
to their profit, so that better-quality solutions have more weight during the probability update step.
 
\begin{algorithm}[H]
\caption{Compute Objective-Based Weights}
\label{alg:weights}
\begin{algorithmic}[1]
\Require Feasible samples $\mathcal{S}_{\mathrm{feas}}$, profit
         function $f(\bm{x})=\bm{v}^\top\bm{x}$
\Ensure  Weight vector $\bm{w}$
\State $f_{\max} \gets \max_{\bm{x}\in\mathcal{S}_{\mathrm{feas}}} f(\bm{x})$
\For{each $\bm{x}_k \in \mathcal{S}_{\mathrm{feas}}$}
    \State $w_k \gets
      \dfrac{\exp\bigl(f(\bm{x}_k) - f_{\max}\bigr)}
            {\sum_{k'}\exp\bigl(f(\bm{x}_{k'}) - f_{\max}\bigr)}$
\EndFor
\State\Return $\bm{w}$
\end{algorithmic}
\end{algorithm}
 
\noindent
Subtracting $f_{\max}$ in the exponent ensures numerical stability, ensuring all exponents are negative.

\subsection{Batched Probability Update}
 
Bit-wise selection probabilities are updated via a batched weighted average.  Partitioning into $K$~batches and averaging the per-batch estimates provides variance reduction compared
to a single global average.
 
\begin{algorithm}[H]
\caption{Update Selection Probabilities}
\label{alg:probs}
\begin{algorithmic}[1]
\Require $\mathcal{S}_{\mathrm{feas}}$, weights $\bm{w}$,
         number of batches~$K$, number of items~$n$
\Ensure  Updated probabilities $\bm{p}^{\mathrm{new}}\in[0,1]^n$
\State Partition $\mathcal{S}_{\mathrm{feas}}$ into $K$ batches
       $\mathcal{B}_1,\dots,\mathcal{B}_K$
\For{$i = 1$ to $n$}
    \State $p_i^{\mathrm{new}} \gets
      \dfrac{1}{K}\sum_{\ell=1}^{K}
      \dfrac{\sum_{\bm{x}\in\mathcal{B}_\ell} w_{\bm{x}}\, x_i}
            {\sum_{\bm{x}\in\mathcal{B}_\ell} w_{\bm{x}}}$
\EndFor
\State\Return $\bm{p}^{\mathrm{new}}$
\end{algorithmic}
\end{algorithm}
 
\subsection{Two-Phase Sample Repair}
 Each sample undergoes a two-phase greedy repair procedure to increase feasibility and objective value among samples.
 
\begin{algorithm}[H]
\footnotesize
\caption{Two-Phase Sample Repair}
\label{alg:repair}
\begin{algorithmic}[1]
\Require Sample $\bm{x}$, probabilities $\bm{p}$, values $\bm{v}$,
         constraint data $(W,\bm{\sigma},\bm{C},\Phi^{-1})$,
         flip budgets $F_{\mathrm{rem}},F_{\mathrm{add}}$
\Ensure  Repaired sample $\bm{x}'$
\State $\bm{x}' \gets \bm{x}$
\State
\State \textbf{--- Phase 1: Feasibility Repair ---}
\For{$t = 1$ to $F_{\mathrm{rem}}$}
    \If{$\bm{x}'$ is feasible}
        \State \textbf{break}
    \EndIf
    \State $V_{\mathrm{tot}} \gets \sum_{m=1}^{M} g^{(m)}(\bm{x}')$
        \hfill\Comment{total violation}
    \State $i^* \gets -1$;\quad $s^* \gets -\infty$
    \For{each $i$ with $x'_i = 1$}
        \State $\bm{x}^{\mathrm{test}} \gets \bm{x}'$ with $x^{\mathrm{test}}_i \gets 0$
        \State $\mathrm{relief}_i \gets V_{\mathrm{tot}} -
               \sum_{m} g^{(m)}(\bm{x}^{\mathrm{test}})$
        \If{$\mathrm{relief}_i \leq 0$}
            \State \textbf{continue}
        \EndIf
        \State $\mathrm{score}_i \gets
               \dfrac{\mathrm{relief}_i}{\max(v_i,\,\delta)}
               + \eta\,(1 - p_i)$
            \hfill\Comment{prefer low-prob, low-value items}
        \If{$\mathrm{score}_i > s^*$}
            \State $i^* \gets i$;\quad $s^* \gets \mathrm{score}_i$
        \EndIf
    \EndFor
    \If{$i^* \geq 0$}
        \State $x'_{i^*} \gets 0$
    \Else
        \State \textbf{break}
            \hfill\Comment{no improving removal; stop}
    \EndIf
\EndFor
\State
\State \textbf{--- Phase 2: Greedy Improvement ---}
\If{$\bm{x}'$ is feasible}
    \State Sort unselected items by $v_i$ descending
    \State $t \gets 0$
    \For{each unselected item $i$ in sorted order}
        \If{$t \geq F_{\mathrm{add}}$}
            \State \textbf{break}
        \EndIf
        \State $\bm{x}^{\mathrm{test}} \gets \bm{x}'$ with $x^{\mathrm{test}}_i \gets 1$
        \If{$\bm{x}^{\mathrm{test}}$ is feasible}
            \State $x'_i \gets 1$;\quad $t \gets t + 1$
        \EndIf
    \EndFor
\EndIf
\State\Return $\bm{x}'$
\end{algorithmic}
\end{algorithm}
 
Phase~1 focuses exclusively on eliminating constraint violations by
removing items.  The scoring function
\[
\mathrm{score}_i
\;=\;
\frac{\mathrm{relief}_i}{\max(v_i,\,\delta)}
\;+\;
\eta\,(1-p_i)
\]
balances two considerations: (a)~maximizing constraint relief per unit
of lost profit, and (b)~preferring to remove items with low selection
probability~$p_i$, reflecting the collective signal from previously
observed high-quality samples.  Here, $\delta>0$ is a small constant to
avoid division by zero and $\eta\geq 0$ controls the strength of the
probability bias.
 
Phase~2 then exploits any remaining capacity slack by greedily adding
the highest-value items that fit within the constraints.  The SOCP constraint
\eqref{eq:socp-constraint} is non-separable since the variance term couples
all selected items, meaning that each candidate addition must be checked against the
full constraint set.
 
In the case where no single removal reduces the total violation, Phase~1 cannot fully restore feasibility. Therefore, the sample is returned as-is rather than replaced by
a random draw from~$\bm{p}$.  This preserves structural information
from the original quantum sample; infeasible samples are subsequently
filtered out during the probability update step.
\subsection{Main Self-Consistent Loop}
 
The three components are combined in an iterative loop, shown below, that
alternates between repairing samples and updating the probability
model until convergence.
 
\begin{algorithm}[H]
\caption{Self-Consistent Probability-Based Recovery}
\label{alg:main}
\begin{algorithmic}[1]
\Require Initial samples $\mathcal{S}_0$, profit function~$f$,
         batches~$K$, tolerance~$\epsilon_{\mathrm{conv}}$,
         max iterations~$\mathcal{T}$
\Ensure  Best feasible solution $\bm{x}^*$
\State Partition $\mathcal{S}_0 \to
       (\mathcal{S}_{\mathrm{feas}},\;\mathcal{S}_{\mathrm{infeas}})$
       by constraint satisfaction
\State Initialize $\bm{p}\in[0,1]^n$ from $\mathcal{S}_{\mathrm{feas}}$
       \hfill
\State $t \gets 0$
\Repeat
    \State \textbf{Repair:}
           Apply Algorithm~\ref{alg:repair} to every sample in
           $\mathcal{S}_{\mathrm{feas}}\cup\mathcal{S}_{\mathrm{infeas}}$
    \State Collect all repaired samples into $\mathcal{S}_{\mathrm{rec}}$
    \State Filter: $\mathcal{S}_{\mathrm{feas}}^{\mathrm{rec}} \gets
           \{\bm{x}\in\mathcal{S}_{\mathrm{rec}} :
             r^{(m)}(\bm{x})\leq 0\;\forall\,m\}$
    \If{$\mathcal{S}_{\mathrm{feas}}^{\mathrm{rec}} = \emptyset$}
        \State $\mathcal{S}_{\mathrm{feas}}^{\mathrm{rec}} \gets \mathcal{S}_{\mathrm{rec}}$
            \hfill\Comment{use all if none feasible}
    \EndIf
    \State $\bm{w} \gets
           \textsc{ComputeWeights}(\mathcal{S}_{\mathrm{feas}}^{\mathrm{rec}},\,f)$
           \hfill\Comment{Algorithm~\ref{alg:weights}}
    \State $\bm{p}^{\mathrm{new}} \gets
           \textsc{UpdateProbabilities}(\mathcal{S}_{\mathrm{feas}}^{\mathrm{rec}},\,
           \bm{w},\,K,\,n)$
           \hfill\Comment{Algorithm~\ref{alg:probs}}
    \State $\Delta \gets
           \frac{1}{n}\sum_{i=1}^{n}|p_i^{\mathrm{new}} - p_i|$
    \State Re-partition $\mathcal{S}_{\mathrm{rec}}$ into
           $\mathcal{S}_{\mathrm{feas}}$ and
           $\mathcal{S}_{\mathrm{infeas}}$
           by constraint satisfaction
    \State $\bm{p} \gets \bm{p}^{\mathrm{new}}$;\quad
           $t \gets t+1$
\Until{$\Delta < \epsilon_{\mathrm{conv}} \;\lor\; t \geq \mathcal{T}$}
\State\Return $\bm{x}^* \gets
        \arg\max_{\bm{x}\in\mathcal{S}_{\mathrm{feas}}} f(\bm{x})$
\end{algorithmic}
\end{algorithm}
 
\paragraph{Convergence.}
The loop terminates when the mean absolute change in selection
probabilities falls below a threshold~$\epsilon_{\mathrm{conv}}$, or
after a maximum of~$\mathcal{T}$ iterations. The repair step drives
infeasible samples toward feasibility, the weighting step concentrates
probability mass on high-profit regions, and the two mechanisms
reinforce each other.
 
\paragraph{Computational cost.}

\begin{figure*}
    \centering
    \includegraphics[trim={0cm 0cm 0cm 2cm}, clip,width=1.15\linewidth]{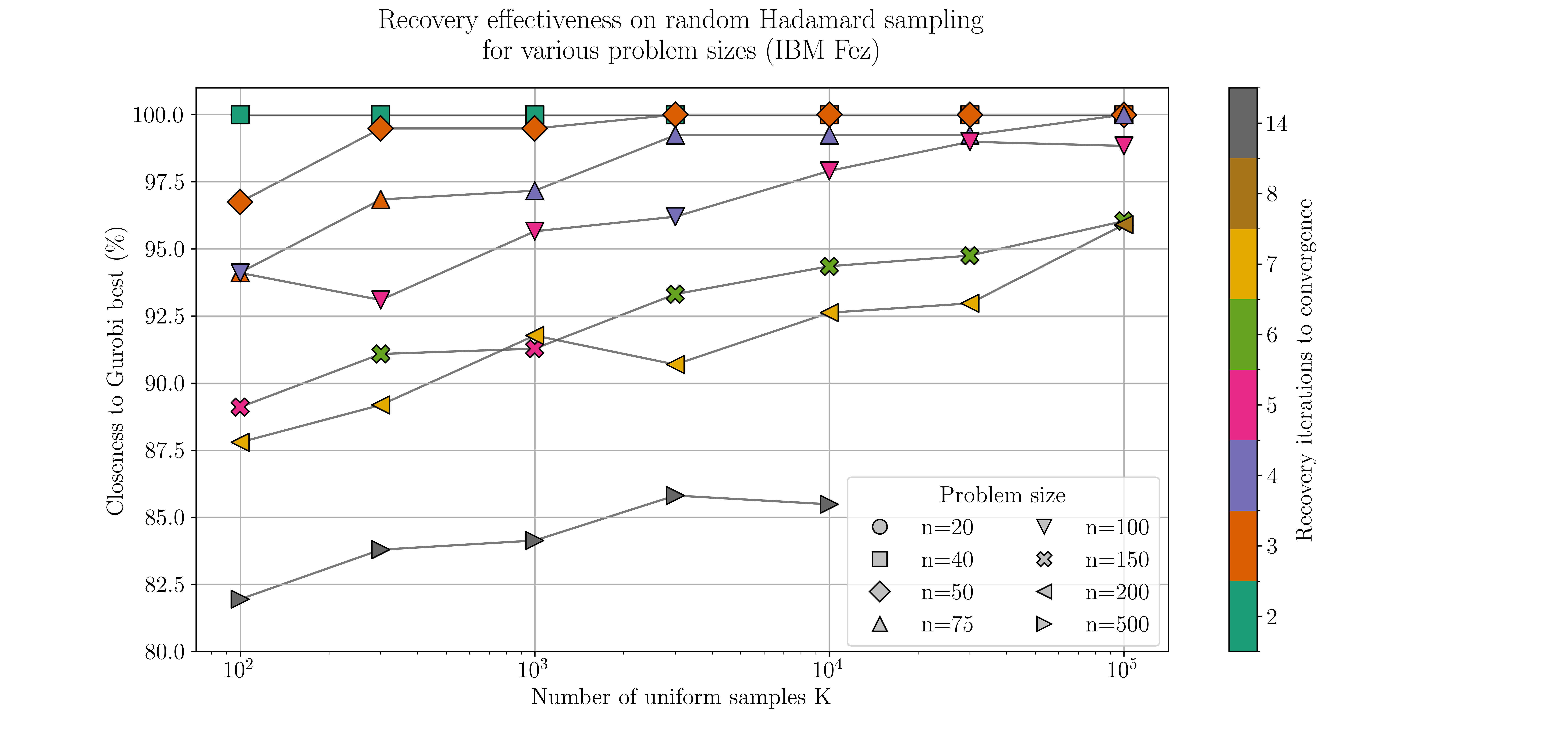}
    \caption{Effectiveness of the unconstrained recovery algorithm on K-randomly-sampled bitstrings from a uniform distribution obtained via Hadamard initialization on IBM Heron processors for $n \le 150$ and using classical sampling for $n=200$ and $500$. Marker shape indicates the problem size $n$, and marker color denotes the number of recovery iterations to convergence. As problem size increases, the recovered solutions tend to be farther from the Gurobi-best solution and require more recovery steps. Increasing the number of K‑random samples tends to improve solution quality, reducing the closeness gap. At $n=500$, prohibitive runtimes necessitated a cap of 14 recovery iterations, and results with $N_{\mathrm{shots}} > 10^4$ are omitted.}
    \label{fig: recovery_uniform}
\end{figure*}

\begin{figure*}
    \centering
    \includegraphics[trim={0cm 0cm 0cm 1.25cm}, clip,width=0.9\linewidth]{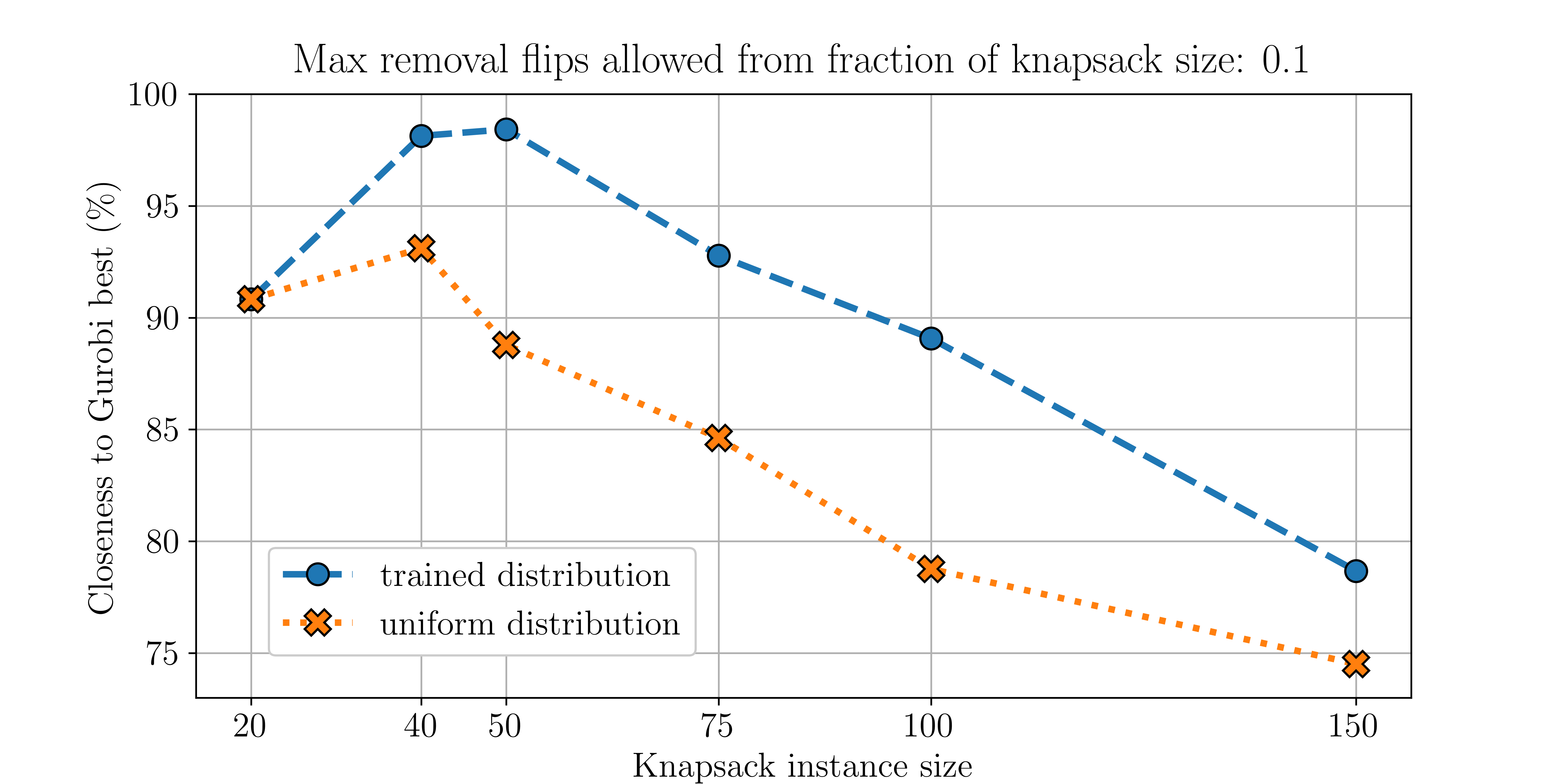}
    \caption{Closeness to the best Gurobi solution for knapsack instances of varying sizes under a recovery scheme with a limited removal budget equal to 10\% of the instance size. Recovery initialized from a trained quantum circuit (blue) consistently yields solutions closer to the Gurobi best than recovery initialized from a uniform distribution (orange). The performance gap remains relatively constant as the knapsack size increases, indicating that the trained distribution provides a higher-quality starting point for scalable instances when under recovery budget constraints.}
    \label{fig: recovery_limited}
\end{figure*}

The recovery procedure runtime is primarily dominated by feasibility checks and the number of distinct bitstrings processed.  Each feasibility evaluation requires matrix operations over $M \times n$ problem parameters and therefore costs $\mathcal{O}(Mn)$. For each recovery iteration, the algorithm processes a total of $\mathcal{S}$ unique bitstrings and applies the recovery routine to each of them. The routine performs up to $F_{\mathrm{rem}}$ specified feasibility-restoration steps, where each step evaluates the current bitstring together with up to $F_{\mathrm{add}}$ specified candidate flips. This yields at most $((F_{\mathrm{rem}} +F_{\mathrm{add}})n)$ feasibility evaluations per repaired sample, and the resulting worst-case per-iteration runtime scales as 
\begin{equation}
    \mathcal{O} \bigg( |\mathcal{S}|(F_{\mathrm{rem}}+F_{\mathrm{add}})Mn\bigg)
\end{equation}
When $F_{\mathrm{rem}}$, $F_{\mathrm{add}}$, and $M$ are treated as constants, this reduces to $\mathcal{O}(|\mathcal{S}|n)$, which is linear in both the number of processed bitstrings and problem size. 

In practice, finite sampling budgets further reduce the scaling to be linear in $n$, and \autoref{fig: recovery_uniform} details the dependence of the recovery procedure on sampling resources. An ablation study was performed in which inputs to the recovery scheme were samples drawn from a random uniform distribution. The total number of samples was varied between $10^2$ and $10^5$, obtained either classically for $n > 150$ or from Hadamard-initialized quantum states on IBM Heron processors for $n \le 150$. For each sample budget and problem size with $n < 500$, the recovery algorithm was executed until convergence, and both the number of recovery iterations required and the quality of the recovered solution relative to the Gurobi-best solution were recorded. 

As expected, reducing the number of samples significantly decreases the probability of observing high‑quality bitstrings near the Gurobi‑best solution in the raw measurement data. This effect becomes increasingly pronounced as the problem size grows: the number of configurations expands combinatorially, while the probability mass associated with constraint‑saturating, high‑profit solutions correspondingly concentrates on a vanishingly small fraction of the state space. With fewer shots, such configurations are therefore sampled more sparsely or may be entirely absent, requiring additional recovery iterations to reconstruct them through local repairs.

Consistent with this interpretation, larger problem instances and lower shot budgets exhibit a marked increase in the number of recovery steps required to reach convergence, along with a degradation in the best recovered objective value relative to the Gurobi-best. This is particularly pronounced when $n=500$, where excessive runtimes led to instantiating a cap on the number of recovery iterations to 14 and omission of results with $N_{\mathrm{shots}} > 10^4$. Conversely, increasing the shot budget improves sampling diversity, raising the likelihood that near‑optimal bitstrings are present in the initial pool and accelerating convergence of the recovery procedure. Taken together, these results isolate the role of finite‑shot sampling noise in governing recovery performance, disentangling sampling‑driven effects from problem‑size‑dependent computational hardness.

In addition, the performance of the recovery algorithm depends upon the sampling distribution upon which it acts. \autoref{fig: recovery_limited} examines the effectiveness of the recovery scheme constrained to a limited removal flip budget, where the quality of the initial distribution becomes the dominant factor governing final performance. In this regime, recovery initialized from samples obtained via the quantum algorithm and optimized circuit consistently produce solutions closer to the Gurobi best across all instance sizes when compared to uniform distribution starting points. This behavior reflects the fact that a restricted recovery budget cannot compensate for poor initial samples. In particular, when starting from a uniform distribution, the sampled bitstrings are typically far from high-quality, constraint-satisfying configurations, requiring large coordinated changes that exceed the allowed number of flips. In contrast, the trained circuit distribution concentrates probability mass closer to near-feasible regions of the solution space, so that only small local repairs are needed to reach high-quality solutions. As the instance size grows, this distinction remains relatively constant, highlighting that under realistic, budget-limited recovery, improvements in the learned distribution translate directly into improved end-to-end solution quality.

\section{Classical solvers} \label{app:classical_solvers}

\subsection{Problem Formulation}

All classical heuristic solvers use the deterministic form of the objective function \cite{GOYAL2010161}. For each constraint $m$, the left-hand side is computed as:
\begin{equation}
L_m(\bm{x}) = \sum_{i=1}^n \mu_i^{(m)} x_i + \Phi^{-1}(1-\epsilon) \sqrt{\sum_{i=1}^n (\sigma_i^{(m)})^2 x_i}
\end{equation}

The constraint violation is defined as $V_m(\bm{x}) = \max(0, L_m(\bm{x}) - C^{(m)})$, and the total violation across all constraints is $V(\bm{x}) = \sum_{m=1}^M V_m(\bm{x})$. The energy function to be maximized is then:
\begin{equation}
E(\bm{x}) = \bm{v}^\top \bm{x} - \lambda \cdot V(\bm{x}),
\end{equation}
where $\lambda = \sum_{i=1}^n v_i$ is the penalty multiplier, chosen as the sum of all item values to provide a principled balance between objective maximization and constraint satisfaction.
A solution is considered feasible if $L_m(\bm{x}) \leq C^{(m)}$ for all constraints $m \in \{1, \ldots, M\}$. Additionally, throughout this section, $\bm{x}^*$ denotes the best feasible solution encountered so far; infeasible solutions may guide the search via the energy function but are not reported as final solutions.



All metaheuristic solvers in the following sections initialize solutions uniformly at random, with each binary variable $x_i$ drawn independently from a Bernoulli distribution with probability 0.5. This ensures unbiased starting points across different runs and problem instances. For Parallel Tempering, each of the $K$ replicas is initialized independently. For the Genetic Algorithm, all $P$ individuals in the initial population are generated independently using this random initialization scheme.

All metaheuristics are subject to a wall-clock limit of 1800 seconds per
instance. In addition, parallel tempering and simulated annealing are capped
at 200{,}000 iterations, and the genetic algorithm at 2{,}000 generations with
a population of 100 (i.e., 200{,}000 offspring evaluations). Tabu search
differs in that each iteration evaluates up to 100 candidate neighbors rather
than a single proposal; its cap is therefore set to 32{,}000 iterations,
chosen such that its total number of loss-function evaluations
($\approx 3.2\times10^{6}$) matches that of parallel tempering, whose
8 replicas each evaluate two solutions per iteration.

\subsection{Gurobi (Exact Solver)}

See \autoref{sec:instances} for more about Gurobi.

\subsection{Parallel Tempering}

Parallel Tempering (PT) maintains multiple replicas of the solution at different temperatures, allowing efficient exploration of the solution space through both local moves and temperature swaps.

\begin{algorithm}
\caption{Parallel Tempering for Chance-Constrained knapsack}
\begin{algorithmic}[1]
\Require $n$ items, $M$ constraints, parameters $\bm{v}, \boldsymbol{\mu}, \boldsymbol{\sigma}, \bm{C}, \epsilon$
\Require $K$ replicas, temperatures $T_1 < T_2 < \cdots < T_K$, swap interval $S$
\State Initialize $K$ random solutions $\bm{x}^{(1)}, \ldots, \bm{x}^{(K)}$
\State $\bm{x}^* \gets \arg\max_k E(\bm{x}^{(k)})$ \Comment{Best solution}
\For{$t = 1$ to $T_{\max}$ or until time limit}
    \For{$k = 1$ to $K$} \Comment{Metropolis step for each replica}
        \State $\bm{x}' \gets \bm{x}^{(k)}$
        \State Flip random bit: $x'_i \gets 1 - x_i$ for random $i$
        \State $\Delta E \gets E(\bm{x}') - E(\bm{x}^{(k)})$
        \If{$\Delta E > 0$ or $\text{rand}() < \exp(\Delta E / T_k)$}
            \State $\bm{x}^{(k)} \gets \bm{x}'$ \Comment{Accept move}
        \EndIf
        \If{$E(\bm{x}^{(k)}) > E(\bm{x}^*)$ and $\bm{x}^{(k)}$ is feasible}
            \State $\bm{x}^* \gets \bm{x}^{(k)}$
        \EndIf
    \EndFor
    \If{$t \bmod S = 0$} \Comment{Attempt replica swaps}
        \For{$k = 1$ to $K-1$}
            \State $\Delta \gets (E(\bm{x}^{(k)}) - E(\bm{x}^{(k+1)})) \cdot (1/T_k - 1/T_{k+1})$
            \If{$\Delta > 0$ or $\text{rand}() < \exp(\Delta)$}
                \State Swap $\bm{x}^{(k)} \leftrightarrow \bm{x}^{(k+1)}$
            \EndIf
        \EndFor
    \EndIf
\EndFor
\State \Return $\bm{x}^*$
\end{algorithmic}
\end{algorithm}

\subsubsection{Implementation Details}

Our implementation uses $K = 8$ temperature replicas with temperatures geometrically spaced between $T_{\min} = 0.1$ and $T_{\max} = 10.0$. Each replica performs Metropolis-Hastings steps where a single random bit is flipped to generate a neighbor solution. The acceptance probability follows the standard Metropolis criterion: moves that increase energy are always accepted, while moves that decrease energy are accepted with probability $\exp(\Delta E / T_k)$ where $\Delta E$ is the energy difference and $T_k$ is the replica's temperature. Every $S = 100$ iterations, adjacent replicas attempt to swap configurations using the parallel tempering acceptance criterion, which depends on both the energy difference and temperature difference between replicas.

\subsection{Tabu Search}

Tabu Search (TS) uses a memory structure to avoid cycling and encourage exploration of new regions of the solution space.

\begin{algorithm}
\caption{Tabu Search for Chance-Constrained knapsack}
\begin{algorithmic}[1]
\Require $n$ items, $M$ constraints, parameters $\bm{v}, \boldsymbol{\mu}, \boldsymbol{\sigma}, \bm{C}, \epsilon$
\Require Tabu tenure $\tau$, max neighbors $N_{\max}$
\State Initialize random solution $\bm{x}$
\State $\bm{x}^* \gets \bm{x}$ \Comment{Best solution}
\State $\mathcal{T} \gets \emptyset$ \Comment{Tabu list (FIFO queue)}
\For{$t = 1$ to $T_{\max}$ or until time limit}
    \State $\mathcal{N} \gets \{\bm{x}' : \bm{x}'$ differs from $\bm{x}$ in one bit$\}$
    \If{$|\mathcal{N}| > N_{\max}$}
        \State Sample $N_{\max}$ neighbors randomly from $\mathcal{N}$
    \EndIf
    \State $\bm{x}_{\text{best}} \gets \arg\max_{\bm{x}' \in \mathcal{N}} E(\bm{x}')$
    \State $i_{\text{flip}} \gets$ index of flipped bit in $\bm{x}_{\text{best}}$
    \If{$i_{\text{flip}} \notin \mathcal{T}$ or $E(\bm{x}_{\text{best}}) > E(\bm{x}^*)$} \Comment{Aspiration}
        \State $\bm{x} \gets \bm{x}_{\text{best}}$
        \State Add $i_{\text{flip}}$ to $\mathcal{T}$ (remove oldest if $|\mathcal{T}| = \tau$)
        \If{$E(\bm{x}) > E(\bm{x}^*)$ and $\bm{x}$ is feasible}
            \State $\bm{x}^* \gets \bm{x}$
        \EndIf
    \EndIf
\EndFor
\State \Return $\bm{x}^*$
\end{algorithmic}
\end{algorithm}

\subsubsection{Implementation Details}

Our Tabu Search implementation maintains a tabu list with tenure $\tau = 20$ iterations, implemented as a first-in-first-out (FIFO) queue that stores the indices of recently flipped bits. At each iteration, we generate the neighborhood by considering single-bit flips if $n \leq N_{\max}$ from the current solution. When the neighborhood size exceeds $N_{\max} = 100$, we randomly sample $N_{\max}$ neighbors to maintain computational efficiency. The best neighbor is selected based on the energy function $E(\bm{x})$, and the corresponding bit flip is executed unless it is tabu. However, we employ an aspiration criterion that overrides the tabu status if the move leads to a solution better than the current best known solution.

\subsection{Simulated Annealing}

Simulated Annealing (SA) uses a temperature-based acceptance criterion with geometric cooling to gradually reduce exploration and focus on exploitation.

\begin{algorithm}
\caption{Simulated Annealing for Chance-Constrained knapsack}
\begin{algorithmic}[1]
\Require $n$ items, $M$ constraints, parameters $\bm{v}, \boldsymbol{\mu}, \boldsymbol{\sigma}, \bm{C}, \epsilon$
\Require Initial temp $T_0$, final temp $T_f$, cooling rate $\alpha$, iterations per temp $I$
\State Initialize random solution $\bm{x}$
\State $\bm{x}^* \gets \bm{x}$ \Comment{Best solution}
\State $\mathcal{T} \gets T_0$
\While{$\mathcal{T} > T_f$ and time limit not exceeded}
    \For{$i = 1$ to $I$}
        \State $\bm{x}' \gets \bm{x}$
        \State Flip random bit: $x'_i \gets 1 - x_i$ for random $i$
        \State $\Delta E \gets E(\bm{x}') - E(\bm{x})$
        \If{$\Delta E > 0$ or $\text{rand}() < \exp(\Delta E / \mathcal{T})$}
            \State $\bm{x} \gets \bm{x}'$ \Comment{Accept move}
        \EndIf
        \If{$E(\bm{x}) > E(\bm{x}^*)$ and $\bm{x}$ is feasible}
            \State $\bm{x}^* \gets \bm{x}$
        \EndIf
    \EndFor
    \State $\mathcal{T} \gets \alpha \cdot \mathcal{T}$ \Comment{Geometric cooling}
\EndWhile
\State \Return $\bm{x}^*$
\end{algorithmic}
\end{algorithm}

\subsubsection{Implementation Details}

Our Simulated Annealing implementation begins with an initial temperature of $T_0 = 10.0$ and cools geometrically with rate $\alpha = 0.95$ until reaching a final temperature of $T_f = 0.01$. At each temperature level, we perform $I = 1000$ iterations where a single random bit is flipped to generate a neighbor solution. The acceptance criterion follows the Metropolis rule: energy-increasing moves are always accepted, while energy-decreasing moves are accepted with probability $\exp(\Delta E / \mathcal{T})$ where $\Delta E$ is the energy change and $\mathcal{T}$ is the current temperature. After completing all iterations at a given temperature, we update the temperature as $T_{\text{new}} = \alpha \cdot T_{\text{old}}$.

\subsection{Genetic Algorithm}

The Genetic Algorithm (GA) maintains a population of solutions and evolves them using selection, crossover, and mutation operators inspired by natural evolution.

\begin{algorithm}
\caption{Genetic Algorithm for Chance-Constrained knapsack}
\begin{algorithmic}[1]
\Require $n$ items, $M$ constraints, parameters $\bm{v}, \boldsymbol{\mu}, \boldsymbol{\sigma}, \bm{C}, \epsilon$
\Require Population size $P$, crossover rate $p_c$, mutation rate $p_m$, tournament size $k$, elitism $e$
\State Initialize population $\mathcal{P} = \{\bm{x}_1, \ldots, \bm{x}_P\}$ randomly
\State $\bm{x}^* \gets \arg\max_{\bm{x} \in \mathcal{P}} E(\bm{x})$ \Comment{Best solution}
\For{$g = 1$ to $G_{\max}$ or until time limit}
    \State Evaluate fitness: $f_i \gets E(\bm{x}_i)$ for all $\bm{x}_i \in \mathcal{P}$
    \State Sort $\mathcal{P}$ by fitness (descending)
    \State $\mathcal{P}_{\text{new}} \gets \{\bm{x}_1, \ldots, \bm{x}_e\}$ \Comment{Elitism}
    \While{$|\mathcal{P}_{\text{new}}| < P$}
        \State $\bm{p}_1 \gets$ TournamentSelection$(\mathcal{P}, k)$
        \State $\bm{p}_2 \gets$ TournamentSelection$(\mathcal{P}, k)$
        \If{$\text{rand}() < p_c$}
            \State $(\bm{c}_1, \bm{c}_2) \gets$ UniformCrossover$(\bm{p}_1, \bm{p}_2)$
        \Else
            \State $(\bm{c}_1, \bm{c}_2) \gets (\bm{p}_1, \bm{p}_2)$
        \EndIf
        \State $\bm{c}_1 \gets$ Mutate$(\bm{c}_1, p_m)$
        \State $\bm{c}_2 \gets$ Mutate$(\bm{c}_2, p_m)$
        \State Add $\bm{c}_1, \bm{c}_2$ to $\mathcal{P}_{\text{new}}$
    \EndWhile
    \State $\mathcal{P} \gets \mathcal{P}_{\text{new}}$
    \If{$\max_{\bm{x} \in \mathcal{P}} E(\bm{x}) > E(\bm{x}^*)$ and feasible}
        \State $\bm{x}^* \gets \arg\max_{\bm{x} \in \mathcal{P}} E(\bm{x})$
    \EndIf
\EndFor
\State \Return $\bm{x}^*$
\end{algorithmic}
\end{algorithm}

\subsubsection{Genetic Operators}

Tournament selection chooses a parent by randomly selecting $k$ individuals from the population and returning the one with the highest fitness. Uniform crossover creates two offspring from two parents by independently choosing each bit position from either parent with equal probability, occurring with probability $p_c$. Bit-flip mutation independently flips each bit with probability $p_m$, introducing random variation to maintain population diversity.

\subsubsection{Implementation Details}

Our Genetic Algorithm maintains a population of $P = 100$ individuals, each representing a binary solution vector. At each generation, we first evaluate the fitness of all individuals using the energy function $E(\bm{x})$. We preserve the $e = 2$ best individuals through elitism, directly copying them to the next generation. The remaining population is generated through tournament selection with tournament size $k = 3$, followed by uniform crossover with probability $p_c = 0.8$ and bit-flip mutation with per-bit probability $p_m = 0.01$.

\end{document}